\def\UseOption{opt2} 
\newcommand{\rev}[1]{{\color{blue}#1}}
\newcommand{\rev}[1]{#1}
\DeclareMathOperator*{\argmin}{arg\,min}
\DeclareMathOperator*{\mini}{minimize}
\newtheorem{theorem}{Theorem}
\newtheorem{assumption}{Assumption}
\newtheorem{definition}{Definition}
\newtheorem{lemma}{Lemma}
\begin{document}

\title{DAWN: Delay-Aware Wi-Fi Offloading and Network Selection}

\author{Man Hon Cheung and Jianwei Huang, \IEEEmembership{Senior Member, IEEE}
\thanks{This work is supported by the General Research Funds (Project Number CUHK 412713 and 14202814) established under the University Grant Committee of the Hong Kong Special Administrative Region, China. Part of this paper was presented in \cite{cheung_od13}.}
\thanks{M. H. Cheung and J. Huang are with the Department of Information Engineering, the Chinese University of Hong Kong, Hong Kong, China, \hspace{0.5cm}
e-mail: \{mhcheung, jwhuang\}@ie.cuhk.edu.hk.}
}

\maketitle

\begin{abstract}
  To accommodate the explosive growth in mobile data traffic, both mobile cellular operators and mobile users are increasingly interested in offloading the traffic from cellular networks to Wi-Fi networks.
  However, previously proposed offloading schemes mainly focus on reducing the cellular data usage, without paying too much attention on the quality of service (QoS) requirements of the applications.
  In this paper, we study the Wi-Fi offloading problem with delay-tolerant applications under usage-based pricing. We aim to achieve a good tradeoff between the user's payment and its QoS characterized by the file transfer deadline.
  We first propose a general \textbf{D}elay-\textbf{A}ware \textbf{W}i-Fi Offloading and \textbf{N}etwork Selection (DAWN) algorithm for a general single-user decision scenario.
  We then analytically establish the sufficient conditions, under which the optimal policy exhibits a threshold structure in terms of both the time and file size.
  As a result, we propose a monotone DAWN algorithm that approximately solves the general offloading problem, and has a much lower computational complexity comparing to the optimal algorithm. 
  Simulation results show that both the general and monotone DAWN schemes achieve a high probability of completing file transfer under a stringent deadline, and require the lowest payment under a non-stringent deadline as compared with three heuristic schemes.
\end{abstract}

\begin{IEEEkeywords}
Mobile data offloading, cellular and Wi-Fi integration,  dynamic programming, threshold policy.
\end{IEEEkeywords}

\section{Introduction} \label{sec:intro}

  \PARstart{M}{obile} cellular networks nowadays are often heavily loaded due to the huge amount of mobile data traffic generated, for example, through mobile web browsing and mobile video applications.
  According to Cisco's forecast, mobile data traffic will increase by 11-fold between 2013 and 2018 globally \cite{cisco_cv14}.
  On the other hand, the mobile cellular network capacity is growing at a much slower pace, so that it is likely that the mobile traffic demand will exceed the network capacity in the short to medium term \cite{lucent_wr12}.
  As a result, there is an urgent need from the mobile operators (MOs) worldwide to increase the network capacity in a cost-effective and timely manner. 
  An efficient way to ease the cellular congestion is to use complementary technologies, such as Wi-Fi \cite{disruptiveanalysis_cw11}, 
  to offload the traffic originally targeted towards the cellular network.
  Juniper Research estimated that only 40\% of the global mobile data traffic will reach the cellular network in 2017, as most of the traffic are likely to be offloaded using Wi-Fi \cite{juniper_md13}. 
  
  There are two main approaches for the initiation of Wi-Fi offloading, namely user-initiated and operator-initiated offloading. In the \emph{user-initiated} offloading, the mobile user (MU) is responsible for selecting the network technologies that it intends to use. In the \emph{operator-initiated} offloading, however, the operator profile stored in the mobile device prompts the connection manager to initiate the offloading procedure. 
  The MOs would prefer the operator-initiated offloading, as it gives them a better control on users' network selections. However, since the operator-initiated offloading involves complicated network control between the MOs and the MUs, further standardization and development are still under way.
	Currently, the user-initiated offloading is the more popular choice due to its simplicity in implementation, and it will be the focus of this paper.

  New functionalities in some recently proposed IEEE and 3GPP architectures can provide MUs with useful network information for the user-initiated offloading.
  In Hotspot 2.0, which is based on the IEEE 802.11u standard \cite{ieee80211u_std}, the network discovery and selection functionality advertises the network information related to the access network type, roaming consortium, and venue information through management frames.
  The access network discovery and selection function (ANDSF) server \cite{3gpp_ts23402, lucent_wr12}, proposed in 3GPP Release 11, can assist an MU to choose a suitable Wi-Fi network by providing it with a list of preferred access networks and the access network discovery information. 
  Moreover, it is envisaged that network information, such as real-time load and radio conditions, can be broadcast to the MUs through the system information block (SIB) messages currently used in the LTE system \cite[pp.\,46]{4gamericas_io13}.
  With these new architectures for cellular and Wi-Fi integration, MUs can make intelligent network selection and offloading decision based on real-time network load and price information.

  In addition, the standardization effort from the industry has been accompanied by a series of efforts on the characterization of \emph{Wi-Fi offloading} performance from the academia. Recently, measurement studies \cite{balasubramanian_am10, lee_md10} demonstrated that Wi-Fi offloading can significant reduce the cellular network congestion.       
  In fact, the potential benefit of data offloading is even more significant \cite{balasubramanian_am10, lee_md10} for \emph{delay-tolerant} applications, such as e-mail, movie download, and software update, which can tolerate delays ranging from several minutes to several hours without significant negative impact on users' satisfactions.
  For example, the survey in \cite{sen_wt13} reported that more than half of the respondents are willing to wait for 10 minutes to stream YouTube videos and 3-5 hours to download a file when a monetary incentive is given.

  In this paper, we study the user-initiated Wi-Fi offloading problem for delay-tolerant applications, where a user aims to minimize its total data usage payment under \emph{usage-based pricing}, while taking into account the deadline of its application.
  Previous works on user-initiated Wi-Fi offloading policy, which includes \cite{rayment_ac12, lee_md10, balasubramanian_am10, ristanovic_ee11}, mainly focus on reducing the cellular data usage without paying too much attention to the quality of service (QoS) of the user's application.
  As an example, consider the on-the-spot offloading (OTSO) scheme that most smartphones are using by default \cite{rayment_ac12}. The OTSO scheme adopts a simple offloading policy that an MU offloads its data traffic to a Wi-Fi network whenever possible.    
  However, our simulation study suggests that it is not always desirable to offload to Wi-Fi whenever possible, especially when the Wi-Fi network is highly loaded and the deadline is tight. 
  However, in general it is challenging to achieve a good balance between the total payment and the QoS when taking various factors such as network conditions and delay deadlines into consideration.   

  First, we consider a general user offloading scenario, and formulate the delay-aware Wi-Fi offloading problem as a finite-horizon sequential decision problem. We propose a general \textbf{D}elay-\textbf{A}ware \textbf{W}i-Fi Offloading and \textbf{N}etwork Selection (DAWN) algorithm, which achieves a good tradeoff between the total payment and the QoS.
  However, in general, a sequential decision problem is computationally intractable unless the optimal policy has a threshold structure \cite{ngo_oo09}.
  To this end, based on the concepts of \emph{superadditivity} and \emph{subadditivity} \cite[pp.\,103]{puterman_md05}, we derive sufficient conditions under which the optimal policy exhibits threshold structures in terms of both time and the remaining file size to transfer.
  It motivates us to design the monotone DAWN algorithm with a much lower computational complexity that approximately solves the general offloading problem. 
  To the best of our knowledge, this is the first paper that studies offloading algorithm design analytically, which tradeoffs a user's payment and QoS. 
  The insights obtained, even under the single-user setting in the user-initiated offloading, are crucial for us to understand the more complicated multi-user offloading problems in commercial networks.

  In summary, the main contributions of our work are as follows:

\begin{itemize}

\item \emph{Optimal user-initiated offloading algorithm}: We consider the Wi-Fi offloading problem for delay-tolerant applications, and propose a general DAWN algorithm that achieves a good tradeoff between total data usage payment and the user's QoS.

\item \emph{Low-complexity approximation offloading algorithm}: We derive sufficient conditions under which the optimal policy has a threshold structure, and propose a monotone approximation DAWN algorithm with a much lower computational complexity. 

\item \emph{Optimal offloading decisions}:  Simulation results show that the general and monotone DAWN algorithms achieve a high probability of file transfer completion and require a low payment as compared with three heuristic schemes. We also show that Wi-Fi offloading may not be desirable under a tight deadline constraint and a congested Wi-Fi network. 

\end{itemize}

  The rest of the paper is organized as follows.
	We first review the literature in mobile data offloading in Section \ref{sec:literature}.
  We describe our system model in Section \ref{sec:model}, and formulate the delay-aware Wi-Fi offloading problem in Section \ref{sec:prob}. 
  We propose the general DAWN algorithm for the general case in Section \ref{sec:dp}, and the monotone DAWN algorithm for the special case with threshold optimal policy in Section \ref{sec:threshold}.
  Simulation results are given in Section \ref{sec:pe}, and the paper is concluded in Section \ref{sec:concl}.

\section{Literature Review} \label{sec:literature}
  
  The existing mobile data offloading literature focuses on either economics or technology  issues. 
  Related to network economics,
  Zhuo \emph{et al.} in \cite{zhuo_wc11} considered a 3G cellular network, where the MO uses discount coupons to incentivize MUs to use delayed data offloading. The problem was formulated as a reverse auction with one buyer and multiple sellers, where the MO is the buyer, and the MUs are the sellers.  
  Joe-Wong \emph{et al.} in \cite{joewong_os13} studied the user adoption of supplementary technology (e.g., Wi-Fi or femtocell) for cellular traffic offloading. The utility function of each user is related to its valuation of the technology, the congestion level, and the flat pricing of the service provider. 
  The studies in \cite{gao_eo13, iosifidis_ai13} considered an offloading market, where the MOs pay the third-party deployed APs for data offloading. 
  Gao \emph{et al.} in \cite{gao_eo13} characterized the subgame perfect equilibrium in a data offloading game, where the base stations (BSs) propose the market prices, and the APs determine the volume of data traffic that they are willing to offload.
  Iosifidis \emph{et al.} in \cite{iosifidis_ai13}  proposed an iterative and incentive compatible double auction that maximizes the social welfare. 
  Lee \emph{et al.} in \cite{lee_eo13} studied the economic aspects of Wi-Fi offloading in a monopolistic market with multiple MUs and one MO. Each MU is characterized by its willingness to pay, traffic demand, delay profile, and Wi-Fi contact probability. 

  Related to the mobile data offloading technology,
  Dimatteo \emph{et al.} in \cite{dimatteo_ct11} evaluated the costs and benefits of Wi-Fi offloading in metropolitan area with real mobility traces. They characterized the number of Wi-Fi access points (APs) required for the support of a given QoS requirement.	
	Bennis \emph{et al.} in \cite{bennis_wc13} studied the subband selection, power allocation, and scheduling problem of a small cell base station, which can transmit with both the cellular and Wi-Fi interfaces. The base stations can self-organize and adjust their transmission strategies using reinforcement learning. 
	There are a number of recent research results on the study of \emph{delayed Wi-Fi offloading} policy.	
  Balasubramanian \emph{et al.} in \cite{balasubramanian_am10} conducted a measurement study on Wi-Fi availability for moving vehicles. They proposed the Wiffler system for data offloading based on the prediction for future Wi-Fi availability using past mobility history.	
	Lee \emph{et al.} in \cite{lee_md10} performed another measurement study on Wi-Fi offloading with pedestrians. They conducted trace-driven simulations to study the impact of various parameters on the offloading efficiency.
  Ristanovic \emph{et al.} in \cite{ristanovic_ee11} considered energy-efficient offloading for delay-tolerant applications. They showed that the proposed offloading algorithms can offload a significant amount of traffic from the cellular network and extend the battery lifetime.  
  Im \emph{et al.} in \cite{im_ae13} considered the cost-throughput-delay tradeoff in user-initiated Wi-Fi offloading. Given the predicted future usage and the availability of Wi-Fi, the proposed system decides on the application that should offload its traffic to Wi-Fi at a given time, while taking into account the cellular budget constraint of the MU.

  In fact, similar to this paper with a detailed user's decision model, the works related to data offloading algorithm design in \cite{lee_md10, balasubramanian_am10, ristanovic_ee11, im_ae13} focus on the single-user offloading problem.
  On the other hand, the works related to data offloading economics in \cite{zhuo_wc11, joewong_os13, gao_eo13, iosifidis_ai13, lee_eo13} considered simplified models on users' decisions, and they mainly focus on the multi-user offloading problem.

\section{System Model} \label{sec:model}

\begin{figure}[t]
 \centering
   \includegraphics[width=7.5cm, trim = 1cm 3cm 2cm 2cm, clip = true]{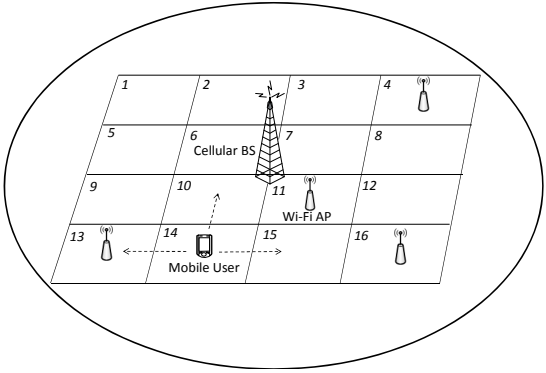} 
 \caption{An example of the network setting, where the MU is moving within a set of $\mathcal{L} = \{1, \ldots, 16\}$ locations. 
 The MU is always under the coverage of a cellular BS, but Wi-Fi is only available at four locations, where $\mathcal{L}^{(1)} = \{4,11,13,16\}$. The rest of the locations do not have Wi-Fi, i.e., $\mathcal{L}^{(0)} = \mathcal{L} \backslash \mathcal{L}^{(1)}$. We assume that the MU is sending a file of size $K$ bits that should be completed by deadline $T$. Given the mobility pattern of the MU, it aims to decide whether it should remain idle ($a=0$), use the cellular network ($a=1$), or use the Wi-Fi network ($a=2$) if it is available in each time slot to reduce its payment under usage-based pricing, while taking into account the deadline of the application.}
\label{fig:network}
\end{figure}

  As shown in Fig. \ref{fig:network}, we consider an MU\footnote{In this paper, since we focus on the user-initiated offloading, it is reasonable to consider the setting that a single MU makes an independent decision without coordinating with the other MUs. We believe that it is an important step towards a better understanding of the multi-user offloading problem in the operator-initiated offloading, where the MO needs to decide on the offloading decisions of multiple MUs.} moving within the coverage of the cellular network, such that the cellular connection is always available to the MU. Occasionally, the MU may be able to access Wi-Fi APs at some locations (e.g., in a coffee shop or in a shopping mall). In other words, the Wi-Fi connection is \emph{location-dependent} and may not be available to the MU at all time.
  The MU is running a file transfer application, which requires transferring of $K$ bits within $T$ time slots. In other words, the file transfer application is \emph{delay-tolerant} with a deadline $T$ \cite{sen_wt13}. For example, an MU on the road wants to send an e-mail with a large attachment of $20$ Mbytes through his smartphone in the next $10$ minutes.
  The MU moves in a set $\mathcal{L} = \{1, \ldots, L\}$ of possible locations, following a Markovian mobility model that can be derived based on the past mobility pattern of the MU. Such a model is widely used in the literature \cite{nicholson_bf08, niyato_em10, gambs_np12}. 

  We consider the \emph{usage-based pricing} used by MOs (such as the one used by Verizon Wireless \cite{verizon}), where the usage price of the cellular network is often higher than that of the Wi-Fi network. It should be noted that the pricing scheme is general, and it includes free Wi-Fi as a special case.
  When making the offloading decisions, the MU needs to take into account the payments regarding different network types and its QoS requirement in terms of file transfer completion.
  First, the MU has the incentive to offload as much data traffic to the Wi-Fi network as possible, so as to reduce its payment. This means that the MU prefers to defer the transmission until a Wi-Fi hotspot is available. 
  On the other hand, the MU should also consider whether it can complete the file transfer by the deadline. For example, if the remaining time before the deadline is short, then the deferred transmission may violate the deadline if the MU does not have enough opportunities to transmit through Wi-Fi in the near future. In this case, instead, the MU should start the file transfer using the ubiquitous cellular connection as soon as possible to reduce the latency.  
  To sum up, an efficient delay-aware Wi-Fi offloading scheme needs to achieve a good \emph{tradeoff} between the total data usage payment and the MU's QoS requirement.

	As the Wi-Fi offloading problem involves decision making in multiple time slots before the deadline, we formulate it as a finite-horizon sequential decision problem in the following section.
  We aim to find the MU's optimal transmission policy, which minimizes the MU's data usage payment, while taking into account the deadline of the file transfer application. 
  By defining the total \emph{cost} as the total payment and a \emph{penalty} for not finishing the file transfer by the deadline, we can derive the optimal transmission policy through dynamic programming (DP).
  We further propose an approximation algorithm based on a non-standard DP theory.

\section{Problem Formulation} \label{sec:prob}

  In this section, we formulate the delay-aware Wi-Fi offloading problem of a \emph{single} MU as a \emph{finite-horizon sequential} decision problem \cite{puterman_md05}. 
  Without loss of generality, we normalize the length of a time slot to be one.
  The MU needs to choose an action (to be explained later) at each \emph{decision epoch}
\begin{equation}
  t \in \mathcal{T} = \{1, \ldots, T\}.
\end{equation}
%

  The system \emph{state} is defined as $\boldsymbol{s} = (k, l)$.     
  The state element $k \in \mathcal{K} \subseteq [0,K]$ represents the \emph{remaining} size (in bits) of a file\footnote{\rev{For the case where the MU is transferring multiple files, we can include additional state elements and decisions, and solve the problem by dynamic programming.}} to be transferred.  
  The state element $l \in \mathcal{L} = \{1, \ldots, L\}$ is the location index, where $L$ is the total number of possible locations that the MU may reach within the $T$ time slots.
  As shown in Fig. \ref{fig:network}, let $\mathcal{L}^{(0)} \subseteq \mathcal{L}$ and $\mathcal{L}^{(1)} \subseteq \mathcal{L}$ be the sets of locations where Wi-Fi is not and is available, respectively, such that $\mathcal{L}^{(0)} = \mathcal{L} \backslash \mathcal{L}^{(1)}$.

   The \emph{action} $a$ specifies the transmission decision of the MU at each decision epoch.
   Specifically, we have $a \in \mathcal{A} = \{0,1,2\}$, where $a = 0$ means that the MU chooses to remain idle, $a = 1$ means that the MU transmits through cellular, and $a = 2$ represents that the MU transmits through Wi-Fi.
   Notice that actions $a = 0$ and $a = 1$ are always available to the MU at all locations. Action $a = 2$, however, is only available at a location $l \in \mathcal{L}^{(1)}$. Thus, the available choice of action $a$ depends on the state element $l$, so $a \in \mathcal{A}^{(l)} \subseteq \mathcal{A}$, where $\mathcal{A}^{(l)}$ is the set of available transmission actions at location $l$: 
\begin{equation} \label{equ:setal}
  \mathcal{A}^{(l)} = 
\begin{cases}
  \{0,1,2\}, & \mbox{if } l \in \mathcal{L}^{(1)}, \\  
  \{0,1\},   & \mbox{if } l \in \mathcal{L}^{(0)}.
\end{cases}  
\end{equation}

  We adopt the commonly used \emph{usage-based pricing}, where the payment of an MU is directly proportional to its data usage. Let $p(l,a)$ be the price per unit of usage for choosing action $a  \in \mathcal{A}^{(l)}$ at location $l$, where $p(l, 0) = 0, \, \forall \, l \in \mathcal{L}$ for the idle action.
	It should be noted that we consider a general location and network dependent pricing, which includes the commonly used location independent pricing as a special case. 
  Let $\mu(l,a)$ be the estimated throughput of the user at location $l$ with action $a \in \mathcal{A}^{(l)}$, where $\mu(l, 0) = 0, \, \forall \, l \in \mathcal{L}$ when the MU remains idle (i.e., when $a = 0$). 
	\rev{We would like to mention that $\mu(l,a)$ can take into account the congestion effect when multiple MUs are simultaneously using the same network\footnote{\rev{For the detailed study of strategic network selection interactions among multiple MUs, we refer readers to our work in \cite{cheung_ca14}.}}.
	We assume that the MU can obtain such real-time price and data rate\footnote{\rev{By allowing Hotspot 2.0 and ANDSF to complement with each other \cite{4gamericas_io13, lucent_wr12}, an MU can query for the speed and load in different types of networks \cite{ruckus_hi13} before transmitting data in these networks.}} information for accessing networks at different time and locations through the system information block (SIB) announced by the MO, as discussed in Section \ref{sec:intro} \cite{4gamericas_io13}.} 
  The \rev{\emph{payment}} of the MU at state $\boldsymbol{s}$ with action $a \in \mathcal{A}^{(l)}$ at time slot $t \in \mathcal{T}$ is
\begin{equation} \label{equ:payment}
  c_t(\boldsymbol{s}, a) = c_t(k, l, a) = \min \{k, \mu(l,a)\} p(l,a),
\end{equation}
which is equal to the data usage payment in the time slot. 

  After the deadline has passed, we define the \emph{penalty} for not being able to finish the file transfer at state $\boldsymbol{s}$ as
\begin{equation} \label{equ:penalty}
 \hat{c}_{T+1}(\boldsymbol{s}) = \hat{c}_{T+1}(k,l) = h(k),
\end{equation}
where $h(k) \geq 0$ is a non-decreasing function of $k$ with $h(0) = 0$. The subscript $T+1$ means that we compute the penalty at the beginning of the $T+1$ time slot (immediately after the deadline). In fact, the MU chooses $h(k)$ according to the QoS requirement of its application. 

  The \emph{state transition probability} $p \bigl( \boldsymbol{s}'  \,|\, \boldsymbol{s}, a \bigr) = p \bigl( (k',l') \,|\, (k,l), a \bigr)$ is the probability that the system will go into state $\boldsymbol{s}' = (k',l')$ in the next time slot if action $a$ is taken at state $\boldsymbol{s} = (k,l)$.
  Since the movement of the MU from location $l$ to location $l'$ is independent of the file size $k$ and transmission action $a$, we have  
\begin{equation} \label{equ:stp}
  p \bigl( \boldsymbol{s}'  \,|\, \boldsymbol{s}, a \bigr) = p \bigl( (k',l') \,|\, (k,l), a \bigr) 
= p(l' \,|\, l) \; p \bigl( k' \,|\, (k,l), a \bigr), 
\end{equation}
where 
\begin{equation} \label{equ:stp_k}
  p \bigl( k' \,|\, (k,l), a \bigr) = 
\begin{cases} 
  1, & \mbox{if } k' = [k - \mu(l,a)]^+ \mbox{ and } a \in \mathcal{A}^{(l)},\\
  0, & \mbox{otherwise,} 
\end{cases}  
\end{equation}
and $[x]^+ = \max \{0,x\}$. 
  $p(l' \,|\, l)$ is the probability\footnote{\rev{Prototype systems, such as BreadCrumbs \cite{nicholson_bf08}, can compute the movement probability by tracking the movement of the device's owner.}} that the MU will move from location $l$ to location $l'$, and it is estimated based on the past mobility pattern of the MU \cite{nicholson_bf08, niyato_em10, gambs_np12}.

	Let $\delta_t: \mathcal{K} \times \mathcal{L} \rightarrow \mathcal{A}$ be a function that specifies the transmission decision of the MU at state $\boldsymbol{s} = (k,l)$ and time slot $t$. We define a \emph{policy} $\boldsymbol{\pi} = (\delta_t(k,l), \, \forall \, k \in \mathcal{K}, l \in \mathcal{L}, t \in \mathcal{T})$ as the set of decision rules for states and time slots.
  We denote $\boldsymbol{s}_t^{\boldsymbol{\pi}} = (k_t^{\boldsymbol{\pi}}, l_t^{\boldsymbol{\pi}})$ as the state at time slot $t$ if policy $\boldsymbol{\pi}$ is used, and we let $\Pi$ be the feasible set of $\boldsymbol{\pi}$. 
  The MU aims to find an optimal policy $\boldsymbol{\pi}^*$ that minimizes the sum\footnote{\rev{For simplicity, we assume that the total payment and penalty have equal weights. If we put a larger weight on the penalty than the total payment, then the probability of completing file transfer would increase, at the expense of an increase in the payment.}} of the expected total \rev{payment} from $t = 1$ to $t = T$ and the penalty at $t = T + 1$ as follows:
\begin{equation} \label{equ:problem}
\begin{array}{rll}
     \displaystyle \mini_{\boldsymbol{\pi} \in \Pi} & \displaystyle E_{\boldsymbol{s}_1}^{\boldsymbol{\pi}} \Biggl[ \sum_{t=1}^T  c_t \bigl( \boldsymbol{s}_t^{\boldsymbol{\pi}}, \delta_t(\boldsymbol{s}_t^{\boldsymbol{\pi}}) \bigr) + \hat{c}_{T+1}(\boldsymbol{s}_{T+1}^{\boldsymbol{\pi}}) \Biggr]. & \\
\end{array}
\end{equation}
$E_{\boldsymbol{s}_1}^{\boldsymbol{\pi}}$ denotes the expectation with respect to the probability distribution of the MU mobility model and policy $\boldsymbol{\pi}$ with an initial state $\boldsymbol{s}_1 = (K, l_1)$, where $l_1$ is the location of the MU at $t = 1$.

\section{General DAWN Algorithm} \label{sec:dp} 

   In this section, we solve problem \eqref{equ:problem} \emph{optimally} using \emph{finite-horizon} DP for the general penalty function, network usage prices, and cellular/Wi-Fi data rates. We propose a general DAWN algorithm that computes the optimal policy.

  Let $v_t(\boldsymbol{s})$ be the minimal expected total cost\footnote{\rev{Here, we use the term \emph{cost} to represent both the payment in \eqref{equ:payment} at time $t \in \mathcal{T}$ and the penalty in \eqref{equ:penalty} at time $T+1$.}} of the MU from time slot $t$ to $T+1$, given that the system is in state $\boldsymbol{s}$ immediately before the decision at time slot $t$. 
  The \emph{optimality equation} \cite[pp.\,83]{puterman_md05} relating the minimal expected total cost at different states for $t \in \mathcal{T}$ is given by
\begin{equation} \label{equ:opteq}
	v_t(\boldsymbol{s}) = v_t(k,l) = \displaystyle \min_{a \in \mathcal{A}^{(l)}} \{ \psi_t(k, l, a) \},
\end{equation}
where for $k \in \mathcal{K}$, $l \in \mathcal{L}$, and $a \in \mathcal{A}^{(l)}$, we have
\opt{opt1}{ 
\begin{eqnarray} 
  \psi_t(k,l,a) = c_t(k,l,a) + \sum_{l' \in \mathcal{L}} \, \sum_{k' \in \mathcal{K}}  p\bigl( (k',l') \,|\,(k,l),a \bigr) \, v_{t+1}(k',l')  & \label{equ:opteq2} \\
  = \min \{k, \mu(l,a)\} p(l,a) +  \sum_{l' \in \mathcal{L}} p(l' \,|\, l) \, v_{t+1} \bigl( [k - \mu(l,a)]^+, l' \bigr). \hspace{-1.45cm} & \label{equ:opteq2_newcost}
\end{eqnarray}
}
\opt{opt2}{ 
\begin{eqnarray} 
  \psi_t(k,l,a)  \hspace{6.3cm} & \nonumber \\
  = c_t(k,l,a) + \sum_{l' \in \mathcal{L}} \, \sum_{k' \in \mathcal{K}}  p\bigl( (k',l') \,|\,(k,l),a \bigr) \, v_{t+1}(k',l') \hspace{0.2cm} & \label{equ:opteq2} \\
  = \! \min \! \{k, \mu(l,a)\} p(l,a) \! + \!\!\! \sum_{l' \in \mathcal{L}} p(l' | l) v_{t+1} \! \bigl( [k \! - \! \mu(l,a)]^+\!\!, l' \!\bigr). & \label{equ:opteq2_newcost}
\end{eqnarray}
}
  The first and second terms on the right hand side of \eqref{equ:opteq2} are the \emph{immediate cost} and the \emph{expected future cost} in the remaining time slots for choosing action $a$, respectively. 
The derivation of \eqref{equ:opteq2_newcost} from \eqref{equ:opteq2} follows directly from \eqref{equ:payment}, \eqref{equ:stp}, and \eqref{equ:stp_k}.
  For $t = T+1$, we set the boundary condition as
\begin{equation} \label{equ:boundary}
	v_{T+1}(\boldsymbol{s}) = \hat{c}_{T+1}(k,l) = h(k), \quad \forall \, k \in \mathcal{K}, l \in \mathcal{L}.
\end{equation}
%


\begin{algorithm} [t] \small
\caption{\emph{General Delay-Aware Wi-Fi Offloading and Network Selection (DAWN) Algorithm.}}
\begin{algorithmic} [1] \label{algo:offload}

\STATE \underline{Planning Phase}:

\STATE Set $v_{T+1}(k,l), \, \forall \, k \in \mathcal{K}, \forall \, l \in \mathcal{L}$ using \eqref{equ:boundary}

\STATE Set $t := T$

\STATE \textbf{while} $t \geq 1$

\STATE $ \ \ \ $ \textbf{for} $l \in \mathcal{L}$

\STATE $ \ \ \ \ \ \ $ Set $k := 0$

\STATE $ \ \ \ \ \ \ $ \textbf{while} $k \leq K$

\STATE $ \ \ \ \ \ \ \ \ \ $ Calculate $\psi_t(k,l,a), \, \forall \, a \in \mathcal{A}^{(l)}$ using \eqref{equ:opteq2_newcost}

\STATE $ \ \ \ \ \ \ \ \ \ $ Set $\displaystyle \delta_t^{*}(k,l) := \argmin_{a \in \mathcal{A}^{(l)}} \{ \psi_t(k,l,a) \}$

\STATE $ \ \ \ \ \ \ \ \ \ $ Set $v_t(k,l) := \psi_t \bigl( k,l, \delta_t^{*}(k,l) \bigr)$

\STATE $ \ \ \ \ \ \ \ \ \ $ Set $k := k + \sigma$

\STATE $ \ \ \ \ \ \ $ \textbf{end while}

\STATE $ \ \ \ $ \textbf{end for}

\STATE $ \ \ \ $ Set $t := t - 1$

\STATE \textbf{end while}

\STATE Output the optimal policy $\boldsymbol{\pi}^*$ for the transmission and Wi-Fi offloading phase

\STATE \underline{Transmission and Wi-Fi Offloading Phase}:

\STATE Set $t := 1$ and $k := K$

\STATE \textbf{while} $t \leq T$ \textbf{and} $k > 0$

\STATE $ \ \ \ $ Determine the location index $l$ from GPS

\STATE $ \ \ \ $ Set action $a := \delta_t^{*}(k,l)$ based on the optimal policy $\boldsymbol{\pi}^*$

\STATE $ \ \ \ $ \textbf{If} $a > 0$

\STATE $ \ \ \ \ \ \ $ Send $\mu(l,1)$ bits to the cellular network if $a = 1$ \\ \quad\quad\quad or offload $\mu(l,2)$ bits to the Wi-Fi network if $a = 2$


\STATE $ \ \ \ \ \ \ $ Set $k := [k - \mu(l,a)]^+$


\STATE $ \ \ \ $ \textbf{end if}

\STATE $ \ \ \ $ Set $t := t + 1$

\STATE \textbf{end while}

\end{algorithmic}
\end{algorithm}    
  


  With the optimality equation, we are ready to propose the general DAWN algorithm in Algorithm \ref{algo:offload}. The algorithm consists of two phases, namely the planning phase and the transmission and Wi-Fi offloading phase. 
  Let $\sigma > 0$ be the granularity of the discrete state element $k$ in the algorithm (such as $1$ Mbits).
  First, in the planning phase, based on the optimality equation in \eqref{equ:opteq} and the boundary condition in \eqref{equ:boundary}, we obtain the \emph{optimal policy} $\boldsymbol{\pi}^*$ that solves problem \eqref{equ:problem} using \emph{backward induction} \cite[pp.\,92]{puterman_md05}.
    Specifically, we first set $v_{T+1}(k,l)$ based on the boundary condition (line 2) of Algorithm \ref{algo:offload}. Then, we obtain the values of $\delta_t^{*}(k,l)$ and $v_t(k,l)$ by updating them recursively backward from time slot $t=T$ to time slot $t=1$ (lines 3 to 16).
    Algorithm \ref{algo:offload} has a computational complexity of  $\mathcal{O}(K L T / \sigma)$ \cite{ngo_oo09}. 


\vspace{0.2cm}
\begin{theorem} \label{thm:dp}
  The policy $\boldsymbol{\pi}^* = (\delta_t^*(k,l), \, \forall \, k \in \mathcal{K}, l \in \mathcal{L}, t \in \mathcal{T})$, where
\begin{equation} \label{equ:deltat}
	\delta_t^{*}(k,l) = \argmin_{a \in \mathcal{A}^{(l)}} \{ \psi_t(k,l,a) \}, 
\end{equation}
is the optimal solution of problem \eqref{equ:problem}. 
\end{theorem}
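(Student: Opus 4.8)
The plan is to prove the theorem by backward induction on the decision epoch $t$, which is the standard principle of optimality for finite-horizon Markov decision processes \cite[Ch.\,4]{puterman_md05}. For a policy $\boldsymbol{\pi}\in\Pi$, let $v_t^{\boldsymbol{\pi}}(\boldsymbol{s})$ denote the expected total cost incurred from time slot $t$ onward, given that the system is in state $\boldsymbol{s}$ immediately before the decision at epoch $t$ and $\boldsymbol{\pi}$ is followed thereafter; by construction the objective in \eqref{equ:problem} equals $v_1^{\boldsymbol{\pi}}(\boldsymbol{s}_1)$. Let $v_t(\boldsymbol{s})$ be the quantity defined by the optimality equation \eqref{equ:opteq}--\eqref{equ:opteq2_newcost} with boundary condition \eqref{equ:boundary}. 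I would establish, for every $t\in\{1,\dots,T+1\}$, the two claims (i) $v_t^{\boldsymbol{\pi}}(\boldsymbol{s})\ge v_t(\boldsymbol{s})$ for all $\boldsymbol{\pi}\in\Pi$ and all $\boldsymbol{s}$, and (ii) the policy $\boldsymbol{\pi}^*$ of \eqref{equ:deltat} attains equality. The base case $t=T+1$ is immediate: no decisions remain, so $v_{T+1}^{\boldsymbol{\pi}}(\boldsymbol{s})=\hat c_{T+1}(\boldsymbol{s})=h(k)=v_{T+1}(\boldsymbol{s})$ for every policy, matching \eqref{equ:boundary}.

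For the inductive step, assume claims (i) and (ii) hold at $t+1$. Fix a state $\boldsymbol{s}=(k,l)$ and an arbitrary $\boldsymbol{\pi}\in\Pi$ whose decision rule at epoch $t$ prescribes $a=\delta_t(\boldsymbol{s})\in\mathcal{A}^{(l)}$. Conditioning on the state reached at $t+1$ (a finite sum, so the expectation and the sum commute) and using the induction hypothesis $v_{t+1}^{\boldsymbol{\pi}}\ge v_{t+1}$ together with \eqref{equ:payment}, \eqref{equ:stp}, \eqref{equ:stp_k},
\begin{align*}
  v_t^{\boldsymbol{\pi}}(\boldsymbol{s})
  &= c_t(\boldsymbol{s},a) + \sum_{\boldsymbol{s}'\in\mathcal{K}\times\mathcal{L}} p\bigl(\boldsymbol{s}'\,|\,\boldsymbol{s},a\bigr)\, v_{t+1}^{\boldsymbol{\pi}}(\boldsymbol{s}') \\
  &\ge c_t(\boldsymbol{s},a) + \sum_{\boldsymbol{s}'\in\mathcal{K}\times\mathcal{L}} p\bigl(\boldsymbol{s}'\,|\,\boldsymbol{s},a\bigr)\, v_{t+1}(\boldsymbol{s}')
   = \psi_t(k,l,a)
   \;\ge\; \min_{a'\in\mathcal{A}^{(l)}}\psi_t(k,l,a') = v_t(\boldsymbol{s}),
\end{align*}
which proves (i) at epoch $t$. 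For (ii), let $\boldsymbol{\pi}^*$ take $\delta_t^*(k,l)=\argmin_{a\in\mathcal{A}^{(l)}}\psi_t(k,l,a)$ at epoch $t$ and follow $\boldsymbol{\pi}^*$ afterward. The same computation, with the inequality on the second line replaced by the equality $v_{t+1}^{\boldsymbol{\pi}^*}=v_{t+1}$ from the induction hypothesis, gives $v_t^{\boldsymbol{\pi}^*}(\boldsymbol{s})=\psi_t\bigl(k,l,\delta_t^*(k,l)\bigr)=v_t(\boldsymbol{s})$, completing the induction.

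Evaluating both claims at $t=1$ with $\boldsymbol{s}_1=(K,l_1)$ yields $v_1^{\boldsymbol{\pi}}(\boldsymbol{s}_1)\ge v_1(\boldsymbol{s}_1)=v_1^{\boldsymbol{\pi}^*}(\boldsymbol{s}_1)$ for every $\boldsymbol{\pi}\in\Pi$, i.e.\ $\boldsymbol{\pi}^*$ minimizes \eqref{equ:problem}. I expect the only real subtlety to be bookkeeping rather than mathematics: writing $v_t^{\boldsymbol{\pi}}$ precisely (the expectation is over the Markovian mobility together with the deterministic file-size update \eqref{equ:stp_k}), verifying the telescoping identity that defines $\psi_t$, and noting that restricting to the Markov deterministic policies in $\Pi$ is without loss of optimality among all history-dependent randomized policies by standard finite-horizon MDP theory \cite[Ch.\,4]{puterman_md05}. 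Everything else is the routine unwinding of the optimality equation.
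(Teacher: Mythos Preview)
Your proposal is correct and follows exactly the same approach as the paper: both invoke the principle of optimality for finite-horizon dynamic programming, with you spelling out the backward-induction argument explicitly while the paper simply cites \cite[pp.\,18]{bertsekas_dp05}.
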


\vspace{0.2cm}
\begin{proof}
  Using the principle of optimality \cite[pp.\,18]{bertsekas_dp05}, we can show that $\boldsymbol{\pi}^*$ is the optimal solution of problem \eqref{equ:problem}.
\end{proof}
\vspace{0.2cm}

  Notice that the optimal policy $\boldsymbol{\pi}^*$ is a \emph{contingency plan} that contains information about the optimal transmission decision at \emph{all} the possible states $(k,l)$ in any time slots $t \in \mathcal{T}$, and the system computes it \emph{offline} before the file transfer begins in the second phase. 
  In the second phase, the MU first determines the location index $l$ in each time slot based on the location information obtained by global positioning system (GPS) (line 20). Then, the MU carries out the transmission decisions based on the optimal policy $\boldsymbol{\pi}^*$ through checking a table (lines 21 to 25), and updates the state element $k$ accordingly (line 24).
  As the complexity of Algorithm \ref{algo:offload} is high in general, it motivates us to design an approximation algorithm with a lower computational complexity in the next section.

\section{Threshold Policy and Monotone DAWN Algorithm} \label{sec:threshold}

  In this section, we establish sufficient conditions under which the optimal policy has a \emph{threshold} structure in the remaining file size $k$ and time $t$. We then propose a monotone DAWN algorithm accordingly, which approximately solves problem \eqref{equ:problem} in the general case with a lower computational complexity.
  Thus, the results cannot be obtained by a direct application of the standard DP theory.

  Specifically, we make the following assumptions for deriving the optimal policy in this section:
  
\begin{assumption} \label{ass:threshold}
(a) The penalty function $h(k)$ is convex and non-decreasing in $k$; 
(b) Wi-Fi is free to the MU (i.e., $p(l,2) = 0, \, \forall \, l \in \mathcal{L}^{(1)}$);
(c) The cellular price is location-independent (i.e., $p(l,1) = p(l',1), \, \forall \, l, l' \in \mathcal{L}, l \neq l'$); 
(d) The cellular and Wi-Fi data rates are location-independent (but these two rates are different in general). That is, $\mu_1 = \mu(l,1), \, \forall \, l \in \mathcal{L}$ and $\mu_2 = \mu(l,2), \, \forall \, l \in \mathcal{L}^{(1)}$; and 
(e) We approximate $\min \{k, \mu(l,1)\}$ in \eqref{equ:payment} by $\mu(l,1)$ for action $a = 1$.
\end{assumption}

  Notice that (a) a convex penalty function can be used to model the increasing marginal penalty for every additional unit of file segment not yet transferred. It is similar to the idea that a concave utility function can be used to model the diminishing marginal utility.
  (b) Free Wi-Fi can often be found in places such as homes, offices, or coffee shops.
  (c) Location-independent cellular price is widely used in practice.
  (d) is a good approximation when the cellular and Wi-Fi data rates across different locations have a small variance.
	(e) is a technical approximation for simplifying the structure of the optimal policy.

	
	 With Assumption \ref{ass:threshold}, the cost at state $\boldsymbol{s}$ with action $a$ at time slot $t$ is modified from \eqref{equ:payment} as
\begin{equation} \label{equ:cost}
  c_t(\boldsymbol{s}, a) = c_t(k, l, a) = I(a = 1)q = 
\begin{cases} 
  q, & \mbox{if } a = 1, \\
  0, & \mbox{otherwise,} 
\end{cases}  
\end{equation}
where $a \in \mathcal{A}^{(l)}$, $I(\cdot)$ is the indicator function, and $q = \mu(l,1) \, p(l,1)$. As a result, $\psi_t(k,l,a)$ in \eqref{equ:opteq2} can be rewritten as
\begin{equation} \label{equ:opteq2_2}
	\psi_t(k,l,a) = I(a = 1)q +  \sum_{l' \in \mathcal{L}} p(l' \,|\, l) \, v_{t+1} \bigl( [k - \mu(l,a)]^+, l' \bigr).
\end{equation}
%
%
%

\subsection{Properties of the Optimal Policy}

  First, we discuss some analytical results related to the properties of the optimal policy under Assumption \ref{ass:threshold}.  


\begin{lemma} \label{lem:vnondeck}
  (a) $v_t(k,l)$ is a non-decreasing function in $k$, $\forall \, l \in \mathcal{L}, t \in \mathcal{T}$.
  (b) $v_t(k,l)$ is a non-decreasing function in $t$, $\forall \, k \in \mathcal{K}, l \in \mathcal{L}$.
\end{lemma}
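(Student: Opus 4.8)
The plan is to establish both monotonicity properties by \emph{backward induction} on the decision epoch $t$, working down from the boundary condition \eqref{equ:boundary} at $t = T+1$ to $t = 1$, and in each step reading off the monotonicity from the structure of $\psi_t$ in \eqref{equ:opteq2_2} together with the optimality equation \eqref{equ:opteq}. Two structural facts do the heavy lifting: the action set $\mathcal{A}^{(l)}$ over which the minimization in \eqref{equ:opteq} is taken depends only on the location component $l$, so it is unaffected when $k$ or $t$ changes; and $\{p(l'\,|\,l)\}_{l'\in\mathcal{L}}$ is a non-negative probability vector summing to one, so forming the expectation $\sum_{l'} p(l'\,|\,l)\,v_{t+1}(\cdot,l')$ preserves any monotonicity that holds for each $l'$.

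For part (a), the base case $t = T+1$ is immediate: $v_{T+1}(k,l) = h(k)$ is non-decreasing in $k$ by Assumption~\ref{ass:threshold}(a) (only monotonicity of $h$ is used here, not convexity). For the inductive step, assume $v_{t+1}(\cdot,l)$ is non-decreasing for every $l$. Fix $l$ and $a \in \mathcal{A}^{(l)}$; since $k \mapsto [k-\mu(l,a)]^+$ is non-decreasing, the induction hypothesis makes $k \mapsto v_{t+1}\bigl([k-\mu(l,a)]^+,l'\bigr)$ non-decreasing for each $l'$, and the convex combination with weights $p(l'\,|\,l) \ge 0$ inherits this, so $\psi_t(\cdot,l,a)$ is non-decreasing in $k$. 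Then $v_t(k,l) = \min_{a \in \mathcal{A}^{(l)}} \psi_t(k,l,a)$ is a pointwise minimum, over a set that does not vary with $k$, of functions non-decreasing in $k$, hence itself non-decreasing in $k$; this closes the induction.

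For part (b), I would again induct backward on $t$, proving the pointwise inequality $v_t(k,l) \le v_{t+1}(k,l)$ for all $k \in \mathcal{K}$, $l \in \mathcal{L}$, and $t = T, T-1, \ldots, 1$. The base case $t = T$ exploits that the idle action $a = 0$ is always feasible with $\mu(l,0) = 0$: from \eqref{equ:opteq2_2} and \eqref{equ:boundary},
\[
v_T(k,l) \le \psi_T(k,l,0) = \sum_{l' \in \mathcal{L}} p(l'\,|\,l)\, h(k) = h(k) = v_{T+1}(k,l).
\]
For the inductive step, suppose $v_{t+1}(k,l) \le v_{t+2}(k,l)$ for all $k,l$. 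For every $l$ and $a \in \mathcal{A}^{(l)}$, the quantities $\psi_t(k,l,a)$ and $\psi_{t+1}(k,l,a)$ differ only in that the former evaluates $v_{t+1}$ and the latter $v_{t+2}$ at the \emph{same} point $\bigl([k-\mu(l,a)]^+,l'\bigr)$, so the hypothesis gives $\psi_t(k,l,a) \le \psi_{t+1}(k,l,a)$; minimizing both sides over the common set $\mathcal{A}^{(l)}$ yields $v_t(k,l) \le v_{t+1}(k,l)$, which is the claim.

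I do not anticipate a genuine obstacle: this is a routine monotone-backward-induction argument. The closest thing to a subtlety is bookkeeping — that the minimization set $\mathcal{A}^{(l)}$ is the same for all $k$ and $t$ (so that ``a pointwise minimum of monotone functions is monotone'' applies), that $[\,\cdot\,]^+$ preserves monotonicity, and, for part (b), the observation that because the stage payment $I(a=1)q$ and the transition law are time-homogeneous, an extra slot before the deadline can never increase the optimal cost — a fact made rigorous by the comparison against the always-available idle action in the base case.
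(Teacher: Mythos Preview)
Your proposal is correct and follows essentially the same backward-induction argument as the paper: for (a) you induct on $t$ using the boundary $v_{T+1}=h(k)$ and the fact that $\psi_t$ inherits monotonicity in $k$, and for (b) you establish $v_T\le v_{T+1}$ via the idle action and then propagate $v_t\le v_{t+1}$ through $\psi_t\le\psi_{t+1}$. The paper's proof is terser but structurally identical; your extra remarks about $\mathcal{A}^{(l)}$ being independent of $k,t$ and $[\,\cdot\,]^+$ preserving monotonicity are exactly the implicit facts the paper is using.
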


  The proof of Lemma \ref{lem:vnondeck} is given in Appendix \ref{app:vnondeck}. 
  Intuitively, given a fixed location $l \in \mathcal{L}$, the expected cost is higher when $k$ is larger (i.e., the remaining file size to transfer is larger) or when $t$ is larger (i.e., it is closer to the deadline).  
    
  Next, we characterize the optimal transmission policy at a location $l \in \mathcal{L}^{(1)}$ with Wi-Fi.
  Since Wi-Fi is free for use, Lemma \ref{lem:wifi}(a) states that action $a = 2$ (i.e., using Wi-Fi) is always preferred to action $a = 0$ (i.e., remaining idle). 
  Lemma \ref{lem:wifi}(b) states that if the Wi-Fi data rate is higher than the cellular data rate, then the MU should always use Wi-Fi.

\vspace{0.2cm}
\begin{lemma} \label{lem:wifi}
  For any location $l \in \mathcal{L}^{(1)}$ (where Wi-Fi is available), we have:\\
  (a) $\psi_t(k,l,0) \geq \psi_t(k,l,2), \, \forall \, k \in \mathcal{K}, t \in \mathcal{T}$. \\
  (b) If $\mu(l,1) \leq \mu(l,2)$, then $\delta_t^*(k,l) = 2, \, \forall \, k \in \mathcal{K}, t \in \mathcal{T}$.
\end{lemma}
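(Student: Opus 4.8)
The plan is to argue directly from the simplified action-value function $\psi_t(k,l,a)$ in \eqref{equ:opteq2_2}, using only the monotonicity of $v_{t+1}(\cdot,l')$ in the remaining file size established in Lemma \ref{lem:vnondeck}(a). At a location $l \in \mathcal{L}^{(1)}$, under Assumption \ref{ass:threshold} and the cost model \eqref{equ:cost}, the three relevant values are
$\psi_t(k,l,0) = \sum_{l' \in \mathcal{L}} p(l'\,|\,l)\, v_{t+1}(k,l')$,
$\psi_t(k,l,2) = \sum_{l' \in \mathcal{L}} p(l'\,|\,l)\, v_{t+1}([k-\mu_2]^+,l')$, and
$\psi_t(k,l,1) = q + \sum_{l' \in \mathcal{L}} p(l'\,|\,l)\, v_{t+1}([k-\mu_1]^+,l')$.
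The key observation is that the location-transition term is a convex combination with weights $p(l'\,|\,l)\ge 0$ that do not depend on the action, so all comparisons reduce to comparing the post-decision file sizes $[k-\mu(l,a)]^+$ and adding the (nonnegative) immediate cost.

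For part (a), since $\mu_2 \ge 0$ we have $[k-\mu_2]^+ \le k$, and Lemma \ref{lem:vnondeck}(a) gives $v_{t+1}([k-\mu_2]^+,l') \le v_{t+1}(k,l')$ for every $l' \in \mathcal{L}$. Multiplying by $p(l'\,|\,l)\ge 0$ and summing preserves the inequality, which yields $\psi_t(k,l,2) \le \psi_t(k,l,0)$ for all $k \in \mathcal{K}$, $t \in \mathcal{T}$.

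For part (b), it remains to compare $a=2$ with $a=1$. When $\mu_1 \le \mu_2$ we have $[k-\mu_1]^+ \ge [k-\mu_2]^+$, so Lemma \ref{lem:vnondeck}(a) gives $v_{t+1}([k-\mu_2]^+,l') \le v_{t+1}([k-\mu_1]^+,l')$ for each $l'$; combining this with $q \ge 0$ and the fixed-weight averaging gives $\psi_t(k,l,2) \le \psi_t(k,l,1)$. Together with part (a), action $a=2$ attains $\min_{a \in \mathcal{A}^{(l)}} \psi_t(k,l,a)$, so $\delta_t^*(k,l) = 2$, choosing $a=2$ under the usual tie-breaking convention when the minimizer is not unique.

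I expect no real obstacle here beyond already having Lemma \ref{lem:vnondeck}(a): everything collapses to the elementary facts $[k-\mu_2]^+ \le k$, $[k-\mu_1]^+ \ge [k-\mu_2]^+$ when $\mu_1 \le \mu_2$, and $q = \mu(l,1)\,p(l,1) \ge 0$. The only points worth flagging in the write-up are that the closed forms for $\psi_t(k,l,0)$, $\psi_t(k,l,1)$, $\psi_t(k,l,2)$ depend on Assumption \ref{ass:threshold}(b) (free Wi-Fi, so $a=2$ incurs no immediate cost) together with the location-independence assumptions (so that $\mu_1$, $\mu_2$, $q$ are well defined independently of $l$), and that the inequalities are weak, so the statement $\delta_t^*(k,l)=2$ should be understood as "$a=2$ is an optimal action" modulo the tie-breaking rule used in \eqref{equ:deltat}.
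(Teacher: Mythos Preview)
Your proposal is correct and matches the paper's own proof essentially line for line: both parts use the closed forms of $\psi_t(k,l,a)$ from \eqref{equ:opteq2_2}, compare the post-decision file sizes via Lemma~\ref{lem:vnondeck}(a), and use $q \ge 0$ for the $a=1$ versus $a=2$ comparison. Your remarks on Assumption~\ref{ass:threshold} and tie-breaking are appropriate caveats but do not alter the argument.
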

\vspace{0.2cm}

  The proof of Lemma \ref{lem:wifi} is given in Appendix \ref{app:wifi}.
  Notice that at $l \in \mathcal{L}^{(1)}$, although $\mathcal{A}^{(l)} = \{0,1,2\}$ from \eqref{equ:setal}, Lemma \ref{lem:wifi}(a) implies that we do not need to consider action $a = 0$ in \eqref{equ:opteq}. Specifically, let
\begin{equation} \label{equ:setal_2}
  \tilde{\mathcal{A}}^{(l)} = 
\begin{cases}
  \{1,2\}, & \mbox{if } l \in \mathcal{L}^{(1)}, \\  
  \{0,1\}, & \mbox{if } l \in \mathcal{L}^{(0)}.
\end{cases}  
\end{equation}
  We can simplify the optimality equation in \eqref{equ:opteq} as
\begin{equation} \label{equ:opteq_2}
	v_t(k,l) = \min_{a \in \mathcal{A}^{(l)}} \{ \psi_t(k,l,a) \} = \min_{a \in \tilde{\mathcal{A}}^{(l)}} \{ \psi_t(k,l,a) \}.
\end{equation}
%

\subsection{Threshold Structure of the Optimal Policy}

  To show the threshold policy in dimension $k$, we need to leverage on the concepts of \emph{superadditivity} and \emph{subadditivity} \cite[pp.\,103]{puterman_md05}. 
  Specifically, with the assumptions we made on the penalty function and data rates, 
  we show in Appendix \ref{app:additive} that $\psi_t(k,l,a)$ is superadditive or subadditive on $\mathcal{K} \times \tilde{\mathcal{A}}^{(l)}$ under different conditions. 
  Then, with $\delta_t^{*}(k,l)$ defined in (\ref{equ:deltat}), we can establish the threshold structure of the optimal policy in dimension $k$ \cite[pp.\,104, 115]{puterman_md05}.  

\vspace{0.2cm}  
\begin{definition} \label{def:additive}
  Given $l \in \mathcal{L}$, the function $\psi_t(k,l,a)$ is \emph{superadditive} on $\mathcal{K} \times \tilde{\mathcal{A}}^{(l)}$ if for $\forall \, \hat{k}, \check{k} \in \mathcal{K}$ and $\forall \, \hat{a}, \check{a} \in \mathcal{A}$, where $\hat{k} \geq \check{k}$ and $\hat{a} \geq \check{a}$, we have
\begin{equation} \label{equ:superadditive}
   \psi_t( \hat{k} ,l, \hat{a} )  + \psi_t( \check{k} ,l, \check{a} ) \geq \psi_t( \hat{k} ,l, \check{a} ) + \psi_t( \check{k} ,l, \hat{a} ).
\end{equation}  
The function $\psi_t(k,l,a)$ is \emph{subadditive} on $\mathcal{K} \times \tilde{\mathcal{A}}^{(l)}$ if the reverse inequality always holds.
\end{definition}   
\vspace{0.2cm}

  To prove the threshold policy in dimension $t$, we show in Appendix \ref{app:vtkl_increment} that the incremental changes of $v_t(k,l)$ with respect to $k$ is non-decreasing in time $t$.
  Overall, we state the threshold policy in both dimensions $k$ and $t$ as follows.

\vspace{0.3cm}
\begin{theorem} \label{thm:threshold}
  Under Assumption \ref{ass:threshold}, the optimal policy $\boldsymbol{\pi}^* = (\delta_t^{*}(k,l), \, \forall \, k \in \mathcal{K}, l \in \mathcal{L}, t \in \mathcal{T})$ has a \emph{threshold} structure in both $k$ and $t$ as follows: 
  
  For location $l \in \mathcal{L}^{(0)}$ without Wi-Fi, we have 
\begin{equation} \label{equ:threshold0}
  \delta_t^{*}(k,l) = 
\begin{cases}
  1 \; \text{ (cellular)},  & \mbox{if } k \geq k^*(l,t), \\
  0 \; \text{  (idle)},  & \mbox{otherwise,} 
\end{cases}
\, \forall \, t \in \mathcal{T},	\text{ and}
\end{equation}
%
%
%
\begin{equation} \label{equ:threshold_time0}
  \delta_t^{*}(k,l) =  
\begin{cases}
  1 \; \text{ (cellular)},  & \mbox{if } t \geq t^*(k,l), \\
  0 \; \text{ (idle)},  & \mbox{otherwise,} 
\end{cases}
\, \forall \, k \in \mathcal{K}, \hspace{0.9cm}
\end{equation}
%
%
where $k^*(l,t)$ and $t^*(k,l)$ are location and time dependent thresholds in dimensions $k$ and $t$, respectively.

  For location $l \in \mathcal{L}^{(1)}$ with Wi-Fi, 
  if the data rate of Wi-Fi is lower than that of cellular (i.e., $\mu_2 \leq \mu_1$), we have 
%
\begin{equation} \label{equ:threshold2}
  \delta_t^{*}(k,l) = 
\begin{cases}
  1 \; \text{ (cellular)},  & \mbox{if } k \geq k^*(l,t), \\
  2 \; \text{ (Wi-Fi)},  & \mbox{otherwise,} 
\end{cases}
\, \forall \, t \in \mathcal{T},	\text{ and}
\end{equation}
%
%
%
\begin{equation} \label{equ:threshold_time2}
  \delta_t^{*}(k,l) = 
\begin{cases}
  1 \; \text{ (cellular)},  & \mbox{if } t \geq t^*(k,l), \\
  2 \; \text{ (Wi-Fi)},  & \mbox{otherwise,} 
\end{cases}
\, \forall \, k \in \mathcal{K}. \hspace{0.9cm}
\end{equation}
%
%
Otherwise (hence $\mu_1 < \mu_2$), we have 
\begin{equation} \label{equ:threshold2_2}
	\delta_t^{*}(k,l) = 2 \; \text{ (Wi-Fi)}, \, \forall \, k \in \mathcal{K}, t \in \mathcal{T}.
\end{equation}
\end{theorem}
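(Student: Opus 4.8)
The plan is to handle the three regimes of the statement separately. The regime $l \in \mathcal{L}^{(1)}$ with $\mu_1 < \mu_2$, equation~\eqref{equ:threshold2_2}, is immediate: it is exactly Lemma~\ref{lem:wifi}(b). For the two remaining (threshold) regimes I would first invoke Lemma~\ref{lem:wifi}(a) and the reduced optimality equation~\eqref{equ:opteq_2} to work with the \emph{two-element}, totally ordered action set $\tilde{\mathcal{A}}^{(l)}$ — namely $\{0,1\}$ when $l \in \mathcal{L}^{(0)}$ and $\{1,2\}$ when $l \in \mathcal{L}^{(1)}$ with $\mu_2 \le \mu_1$. On a two-element ordered action set, a decision rule that is monotone in a state component \emph{is} a threshold rule, so both the $k$-threshold and the $t$-threshold claims reduce to proving that $\delta_t^*(k,l)$ is monotone in $k$ and in $t$, respectively.

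For the threshold in $k$, equations~\eqref{equ:threshold0} and~\eqref{equ:threshold2}, I would apply the standard monotone-optimal-policy result for finite-horizon DP \cite[pp.\,104,\,115]{puterman_md05}: if $\psi_t(k,l,\cdot)$ is subadditive on $\mathcal{K}\times\tilde{\mathcal{A}}^{(l)}$, then a minimizer in~\eqref{equ:deltat} (ties broken toward cellular) is non-decreasing in $k$; if it is superadditive, it is non-increasing in $k$. The technical work is thus the sign of this additivity, supplied by Appendix~\ref{app:additive}: using Assumption~\ref{ass:threshold} (convex non-decreasing $h$; free, location-independent Wi-Fi; location-independent cellular price and rates; and the approximation $\min\{k,\mu(l,1)\}\approx\mu(l,1)$) together with Lemma~\ref{lem:vnondeck}, one shows $\psi_t$ is subadditive at $l \in \mathcal{L}^{(0)}$ and superadditive at $l \in \mathcal{L}^{(1)}$ with $\mu_2 \le \mu_1$. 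A non-decreasing rule on $\{0,1\}$ means ``idle below a threshold, cellular above,'' giving~\eqref{equ:threshold0}; a non-increasing rule on $\{1,2\}$ means ``Wi-Fi below a threshold, cellular above,'' giving~\eqref{equ:threshold2}, with $k^*(l,t)=\min\{k\in\mathcal{K}:\delta_t^*(k,l)=1\}$.

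For the threshold in $t$, equations~\eqref{equ:threshold_time0} and~\eqref{equ:threshold_time2}, fix $(k,l)$, let $a_0=0$ if $l\in\mathcal{L}^{(0)}$ and $a_0=2$ if $l\in\mathcal{L}^{(1)}$, and set
\begin{equation}
  D_t(k,l) := \psi_t(k,l,a_0) - \psi_t(k,l,1) = -q + \sum_{l'\in\mathcal{L}} p(l'\,|\,l)\,\bigl[\,v_{t+1}\bigl([k-\mu(l,a_0)]^+,l'\bigr) - v_{t+1}\bigl([k-\mu_1]^+,l'\bigr)\,\bigr],
\end{equation}
using~\eqref{equ:opteq2_2} and Assumption~\ref{ass:threshold}(e); cellular is optimal at $(k,l,t)$ iff $D_t(k,l)\ge 0$ (ties toward cellular). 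For $l\in\mathcal{L}^{(0)}$ the bracketed term is an increment of $v_{t+1}$ over a $k$-interval of length $\mu_1$, and for $l\in\mathcal{L}^{(1)}$ (where $\mu_2\le\mu_1$) over an interval of length $\mu_1-\mu_2\ge 0$; in either case Appendix~\ref{app:vtkl_increment} shows these increments are non-decreasing in $t$. Hence $D_t(k,l)$ is non-decreasing in $t$, so $\{t\in\mathcal{T}:D_t(k,l)\ge 0\}$ is of the form $\{t^*(k,l),\ldots,T\}$, which is exactly~\eqref{equ:threshold_time0} and~\eqref{equ:threshold_time2}.

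The main obstacle is the ingredient from Appendix~\ref{app:vtkl_increment}, that the first difference of $v_t$ in $k$ is non-decreasing in $t$. This has to be proved by backward induction on $t$, with base case $v_{T+1}=h$ (convex and non-decreasing by Assumption~\ref{ass:threshold}(a)), while carrying along an auxiliary structural invariant on each $v_t(\cdot,l)$ — essentially a discrete convexity / increasing-differences property in $k$ — strong enough to survive both the expectation over $l'$ and, critically, the outer $\min$ in~\eqref{equ:opteq_2}. The delicate point is that the pointwise minimum of two functions with increasing differences need not inherit that property; the way through is to feed back the $k$-threshold structure just established, so that on each side of the switching point $k^*(l,t)$ the minimum agrees with a single branch $\psi_t(\cdot,l,a)$ and the invariant is preserved piecewise, the lone crossing being controlled by the ordering of the two branches. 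Minor care is also needed with the truncation $[\cdot]^+$ near $k=0$ and with a fixed tie-breaking rule in~\eqref{equ:deltat}, so that $k^*(l,t)$ and $t^*(k,l)$ are well defined and the stated ``$\ge$'' forms hold exactly.
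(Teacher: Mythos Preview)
Your proposal is correct and follows the paper's approach almost exactly: the three regimes, the reduction to the two-element action set $\tilde{\mathcal{A}}^{(l)}$ via Lemma~\ref{lem:wifi}(a), the sub/superadditivity route through Appendix~\ref{app:additive} plus \cite[pp.\,104,\,115]{puterman_md05} for the $k$-threshold, and the monotonicity-of-increments result from Appendix~\ref{app:vtkl_increment} for the $t$-threshold are precisely how the paper argues.

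The one place you diverge is in your last paragraph, where you anticipate proving the Appendix~\ref{app:vtkl_increment} lemma by carrying a discrete-convexity invariant in $k$ and ``feeding back'' the $k$-threshold structure to tame the outer $\min$. The paper does \emph{not} need this. Both Lemma~\ref{lem:v_diff} (Appendix~\ref{app:additive}) and Lemma~\ref{lem:v_diff_time} (Appendix~\ref{app:vtkl_increment}) are proved by the same direct four-term telescoping trick: write the target difference as $A+B+C-D$ (resp.\ $E+F+G-H$) by inserting and subtracting $\psi_t$ evaluated at the \emph{optimal} actions $a_1,\ldots,a_4$ at the four relevant states, then observe that two of the four terms are $\ge 0$ purely by optimality of the minimizers, and the remaining pair satisfies $A\ge D$ (resp.\ $E\ge H$) by the induction hypothesis alone. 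This sidesteps entirely the ``min of convex need not be convex'' obstacle you flag, and it does not use the $k$-threshold at any point, so there is no circularity to manage. Your proposed route may also work, but it is more delicate than what the paper actually does.
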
 

  Theorem \ref{thm:threshold} states that when $k$ is above a threshold (i.e., there are many bits waiting to be transmitted) or when $t$ is above a threshold (i.e., the deadline is close), the MU should use the cellular network immediately to avoid the penalty (if Wi-Fi is not available or Wi-Fi is not fast enough).
  The proof of threshold policy in dimension $k$ stated in \eqref{equ:threshold0} and \eqref{equ:threshold2} is given in Appendix \ref{app:threshold}.
  The proof of threshold policy in dimension $t$ stated in \eqref{equ:threshold_time0} and \eqref{equ:threshold_time2} is given in Appendix \ref{app:threshold_time}.  
  The result in \eqref{equ:threshold2_2} is due to Lemma \ref{lem:wifi}(b).

  Furthermore, we use the threshold structure in Theorem \ref{thm:threshold} to establish Theorem \ref{thm:threshold_t}, which help to speed up the search of the thresholds. 
  
\begin{theorem} \label{thm:threshold_t}
  (a) $k^*(l,t-1) \geq k^*(l,t), \; \forall \, l \in \mathcal{L}, t \in \mathcal{T}$. \\
  (b) $t^*(k, l) \geq t^*(k + \sigma, l), \; \forall \, l \in \mathcal{L}, k \in \mathcal{K}$.
\end{theorem}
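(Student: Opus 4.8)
The plan is to obtain Theorem~\ref{thm:threshold_t} as an essentially combinatorial consequence of the \emph{joint} threshold structure already established in Theorem~\ref{thm:threshold}, rather than re-opening the dynamic program. Fix a location $l$. First I would dispose of the case $l \in \mathcal{L}^{(1)}$ with $\mu_1 < \mu_2$: by \eqref{equ:threshold2_2} the MU never uses cellular there, so under the natural conventions $k^*(l,t) := K+\sigma$ and $t^*(k,l) := T+1$ both inequalities are trivial. For every other $l$ there is a unique non-cellular action $\bar a \in \tilde{\mathcal{A}}^{(l)}$ ($\bar a = 0$ if $l \in \mathcal{L}^{(0)}$, and $\bar a = 2$ if $l \in \mathcal{L}^{(1)}$ with $\mu_2 \le \mu_1$), and I would introduce the \emph{cellular region} $C_l := \{(k,t) \in \mathcal{K}\times\mathcal{T} : \delta_t^{*}(k,l) = 1\}$. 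Theorem~\ref{thm:threshold} gives two descriptions of the \emph{same} set $C_l$: slicing by $t$, equations \eqref{equ:threshold0}/\eqref{equ:threshold2} say $(k,t)\in C_l \iff k \ge k^*(l,t)$; slicing by $k$, equations \eqref{equ:threshold_time0}/\eqref{equ:threshold_time2} say $(k,t)\in C_l \iff t \ge t^*(k,l)$. Hence $C_l$ is an up-set in each coordinate separately, and the two threshold functions are just two parametrizations of its lower-left boundary.

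For part~(a) I would argue by contradiction: if $k^*(l,t-1) < k^*(l,t)$ for some $t$, pick $k\in\mathcal{K}$ with $k^*(l,t-1) \le k < k^*(l,t)$; then $(k,t-1)\in C_l$ while $(k,t)\notin C_l$. But $(k,t-1)\in C_l$ implies $t-1 \ge t^*(k,l)$, hence $t \ge t^*(k,l)$, hence $(k,t)\in C_l$ --- a contradiction. For part~(b), symmetrically: if $t^*(k,l) < t^*(k+\sigma,l)$ for some $k$ with $k+\sigma\in\mathcal{K}$, pick $t\in\mathcal{T}$ with $t^*(k,l) \le t < t^*(k+\sigma,l)$; then $(k,t)\in C_l$ while $(k+\sigma,t)\notin C_l$, yet $(k,t)\in C_l$ implies $k \ge k^*(l,t)$, hence $k+\sigma \ge k^*(l,t)$, hence $(k+\sigma,t)\in C_l$ --- again a contradiction.

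As a cross-check, and an alternative route avoiding the combinatorial phrasing, I would note that both monotonicities follow directly from the ``cellular advantage'' $G_t(k,l) := \psi_t(k,l,\bar a) - \psi_t(k,l,1)$, for which $\delta_t^{*}(k,l) = 1 \iff G_t(k,l) \ge 0$ under the tie-to-cellular rule consistent with Theorem~\ref{thm:threshold}. Writing $\psi_t$ via \eqref{equ:opteq2_2}, $G_t(k,l)$ equals $\sum_{l'} p(l'\,|\,l)\bigl( v_{t+1}([k-\mu(l,\bar a)]^+,l') - v_{t+1}([k-\mu_1]^+,l') \bigr) - q$; since $[k-\mu_1]^+ \le [k-\mu(l,\bar a)]^+$, telescoping over the $\sigma$-grid between these two values the inequality ``the $\sigma$-increment of $v_t$ in $k$ is non-decreasing in $t$'' from Appendix~\ref{app:vtkl_increment} shows $G_t(k,l)$ is non-decreasing in $t$ (the constant $q$ being $t$-independent); and the super-/sub-additivity of Appendix~\ref{app:additive} shows $\{k : G_t(k,l)\ge 0\}$ is an up-set in $k$. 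Running the same two contradiction arguments on $G_t$ then yields (a) and (b).

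I expect the only real difficulty to be bookkeeping rather than mathematics: fixing the tie-breaking convention so that $\psi_t(k,l,1)=\psi_t(k,l,\bar a)$ is resolved in favour of cellular, matching Theorem~\ref{thm:threshold}; pinning down the boundary conventions for a threshold not attained in $\mathcal{K}$ or $\mathcal{T}$ so that the inequalities stay meaningful in the corner cases (e.g.\ $t=1$ in~(a), where $t-1\notin\mathcal{T}$, or $k=K$ in~(b), where $k+\sigma\notin\mathcal{K}$); and, for the $G_t$-based route, confirming that $\mu_1$ and $\mu_2$ are integer multiples of the granularity $\sigma$ so that the telescoping runs over genuine grid points. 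None of these affect the structure of the argument.
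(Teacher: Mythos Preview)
Your proposal is correct and matches the paper's approach: the paper's proof in Appendix~\ref{app:threshold_t} is precisely the combinatorial argument you outline, using the two threshold descriptions of the cellular region from Theorem~\ref{thm:threshold} (the paper phrases it directly---``$\delta_t^*(k,l)=j$ for $k<k^*(l,t)$, hence by the time threshold $\delta_{t-1}^*(k,l)=j$ there too, hence $k^*(l,t-1)\ge k^*(l,t)$''---while you phrase the contrapositive as a contradiction, but these are the same argument). Your alternative $G_t$ route via Appendices~\ref{app:additive} and~\ref{app:vtkl_increment} is a valid independent derivation that the paper does not present, and your bookkeeping caveats (tie-breaking, boundary conventions, the trivial $\mu_1<\mu_2$ case) are more explicit than the paper's treatment.
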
  

  Basically, Theorem \ref{thm:threshold_t} states that the threshold in dimension $k$ is non-increasing in $t$, while the threshold in dimension $t$ is non-increasing in $k$.
  The proof of Theorem \ref{thm:threshold_t} is given in Appendix \ref{app:threshold_t}.

\subsection{Monotone DAWN Algorithm}

  With the threshold structure in both dimensions $k$ and $t$ from Theorems \ref{thm:threshold} and \ref{thm:threshold_t}, we propose Algorithm \ref{algo:offload_threshold} with a much lower computational complexity than Algorithm \ref{algo:offload}. 
  In Algorithm \ref{algo:offload_threshold}, it should be noted that we choose to characterize the optimal policy $\boldsymbol{\pi}^*$ using the thresholds $(k^*(l,t),  \, \forall \, l \in \mathcal{L}, t \in \mathcal{T})$ in the file size dimension in \eqref{equ:threshold0} and \eqref{equ:threshold2}.  
  In the planning phase of Algorithm \ref{algo:offload_threshold}, we use the procedure \texttt{THRESHOLD} to obtain the set of thresholds $(k^*(l,t),  \, \forall \, l \in \mathcal{L}, t \in \mathcal{T})$ (line 10) in dimension $k$. 
  Since we execute the algorithm backward from $t = T$ to $t = 1$, after we have found the threshold $k^*(l,t)$ at time $t$, we can reduce the search space of $k^*(l,t-1)$ at time $t-1$ by Theorem \ref{thm:threshold_t}(a).


\begin{algorithm} [t] \small
\caption{\emph{Monotone DAWN Algorithm.}} 
\begin{algorithmic} [1] \label{algo:offload_threshold}

\STATE \underline{Planning Phase} (for $\mu_2 \leq \mu_1$):

\STATE Set $v_{T+1}(k,l), \, \forall \, k \in \mathcal{K}, l \in \mathcal{L}$ using \eqref{equ:boundary}

\STATE Set $t := T$

\STATE \textbf{while} $t \geq 1$

\STATE $ \ \ \ $ \textbf{for} $l \in \mathcal{L}$

\STATE $ \ \ \ \ \ \ $ Call \texttt{THRESHOLD} procedure

\STATE $ \ \ \ $ \textbf{end for}

\STATE $ \ \ \ $ Set $t := t - 1$

\STATE \textbf{end while}

\STATE Output the thresholds $(k^*(l,t),  \, \forall \, l \in \mathcal{L}, t \in \mathcal{T})$ for the transmission and Wi-Fi offloading phase

\STATE \underline{Transmission and Wi-Fi Offloading Phase}:

\STATE Set $t := 1$ and $k := K$

\STATE \textbf{while} $t \leq T$ \textbf{and} $k > 0$

\STATE $ \ \ \ $ Determine the location index $l$ from GPS 


\STATE $ \ \ \ $ \textbf{If} $l \in \mathcal{L}^{(0)}$						

\STATE $ \ \ \ \ \ \ $ \textbf{If} $k \geq k^*(l,t)$, Set $a := 1$, \textbf{else}, Set $a := 0$, \textbf{end if}

%

\STATE $ \ \ \ $ \textbf{else if} $l \in \mathcal{L}^{(1)}$

\STATE $ \ \ \ \ \ \ $ \textbf{If} $\mu_2 \leq \mu_1$

\STATE $ \ \ \ \ \ \ \ \ \ $ \textbf{If} $k \geq k^*(l,t)$, Set $a := 1$, \textbf{else}, Set $a := 2$, \textbf{end if}

\STATE $ \ \ \ \ \ \ $ \textbf{else}

\STATE $ \ \ \ \ \ \ \ \ \ $ Set $a := 2$

\STATE $ \ \ \ \ \ \ $ \textbf{end if}

\STATE $ \ \ \ $ \textbf{end if}

\STATE $ \ \ \ $ \textbf{If} $a > 0$

\STATE $ \ \ \ \ \ \ $ Send $\mu(l,1)$ bits to the cellular network if $a = 1$ \\ \quad\quad\quad or offload $\mu(l,2)$ bits to the Wi-Fi network if $a = 2$


\STATE $ \ \ \ \ \ \ $ Set $k := [k - \mu(l,a)]^+$


\STATE $ \ \ \ $ \textbf{end if}

\STATE $ \ \ \ $ Set $t := t + 1$

\STATE \textbf{end while}
\end{algorithmic}

\bigskip

\textbf{procedure} \texttt{THRESHOLD}
\begin{algorithmic}[1]

\STATE \textbf{If} $l \in \mathcal{L}^{(0)}$, Set $j := 0$, \textbf{else}, Set $j := 2$, \textbf{end if}

\STATE Set $\mathcal{A}^{\text{th}} := \{j\}$, $k := 0$, and $flag := 0$

\STATE \textbf{while} $k \leq K$

\STATE $ \ \ \ $ \textbf{if} $k \geq k^*(l, t+1)$ and $flag = 0$

\STATE $ \ \ \ \ \ \ $ Set $\mathcal{A}^{\text{th}} := \{j,1\}$ and $flag := 1$

\STATE $ \ \ \ $ \textbf{end if}

\STATE $ \ \ \ $ Calculate $\psi_t(k,l,a), \, \forall \, a \in \mathcal{A}^{\text{th}}$ using \eqref{equ:opteq2_2}

\STATE $ \ \ \ $ Set $\displaystyle \delta_t^{*}(k,l) := \argmin_{a \in \mathcal{A}^{\text{th}}} \{ \psi_t(k,l,a) \}$

\STATE $ \ \ \ $ Set $v_t(k,l) := \psi_t \bigl( k,l, \delta_t^{*}(k,l) \bigr)$

\STATE $ \ \ \ $ \textbf{if} $\delta_{t}^{*}(k,l) = 1$ and $flag = 1$

\STATE $ \ \ \ \ \ \ $ Set $\mathcal{A}^{\text{th}} := \{1\}$, $k^*(l,t) := k$, and $flag := 2$

\STATE $ \ \ \ $ \textbf{end if}

\STATE $ \ \ \ $ Set $k := k + \sigma$

\STATE \textbf{end while}
  
\end{algorithmic}

\end{algorithm}

	By knowing the threshold structure in both dimensions $k$ and $t$ in Theorems \ref{thm:threshold} and \ref{thm:threshold_t}, we can speed up the computation of the optimal policy.
  Let $\mathcal{A}^{\text{th}} \subseteq \tilde{\mathcal{A}}^{(l)}$ be the set of feasible actions that we should consider (procedure line 8) for the optimal policy.  
  Instead of considering the two possible actions in $\tilde{\mathcal{A}}^{(l)} = \{1,2\}$ for $l \in \mathcal{L}^{(1)}$ and $\tilde{\mathcal{A}}^{(l)} = \{0,1\}$ for $l \in \mathcal{L}^{(0)}$ in \eqref{equ:setal_2} for \eqref{equ:opteq_2}, we can reduce the amount of computation by only considering one possible action in $\mathcal{A}^{\text{th}}$ under two conditions: 
  (i) When $k < k^*(l, t+1)$, we know from Theorem \ref{thm:threshold_t}(a) that $k < k^*(l, t)$, so we only need to consider $\mathcal{A}^{\text{th}} = \{j\}$ with one element (procedure line 2).
  (ii) When we have reached the threshold that $k > k^*(l, t)$, we know from Theorem \ref{thm:threshold} that we only need to consider $\mathcal{A}^{\text{th}} = \{1\}$ with one element (procedure line 11). 
  In both cases (i) and (ii), $\mathcal{A}^{\text{th}}$ becomes a singleton, and the minimization in line 8 of the procedure is readily known. As a result, the computational complexity is reduced from $\mathcal{O}(K L T / \sigma)$ in Algorithm \ref{algo:offload} to approximately $\mathcal{O}(L \max\{K/\sigma,T \})$ in Algorithm \ref{algo:offload_threshold} \cite{ngo_oo09}. 
  In the second phase, we determine action $a$ based on the threshold optimal policy in dimension $k$ stated in Theorem \ref{thm:threshold}. Specifically, the decisions in lines 16, 19, and 21 are due to \eqref{equ:threshold0}, \eqref{equ:threshold2}, and \eqref{equ:threshold2_2}, respectively.

\section{Performance Evaluations} \label{sec:pe}

  In this section, we evaluate the performance of the general and monotone DAWN schemes by comparing them with three benchmark schemes (the no offloading, on-the-spot offloading \cite{lee_md10}, and Wiffler \cite{balasubramanian_am10} schemes) in terms of the total cost, probability of completing file transfer, and the total payment.
  We also illustrate the threshold policy stated in Theorem \ref{thm:threshold}.
  
  For each set of system parameter choices, we run the simulations 1000 times with randomized Wi-Fi locations, data rates in the cellular and Wi-Fi networks, and the user mobility trajectories, and show the average value.
  The MU is moving within $L = 16$ possible locations in a four by four grid (similar to that in Fig. \ref{fig:network}). 
  To generate the trajectory of the MU, we consider the state transition probabilities $p(l' \,|\, l)$, where we assume that probability that the MU stays at a location between two consecutive time slots is $p(l \,|\, l) = 0.6, \, \forall \, l \in \mathcal{L}$. Moreover, it is equally likely for the MU to move to any one of the neighbouring locations. As an example in Fig. \ref{fig:network}, at location $7$, the probability that the MU will move to one of the locations $3$, $6$, $8$, or $11$ is equal to $(1-0.6)/4 = 0.1$.  
  For another example, at location $1$, the probability that the MU will move to one of the neighbouring locations $2$ and $5$ is equal to $(1-0.6)/2 = 0.2$.  

  
	Unless specified otherwise, we assume that the cellular data rate $\mu(l,1), \forall \, l \in \mathcal{L}$ and the Wi-Fi data rate $\mu(l,2), \forall \, l \in \mathcal{L}^{(1)}$ are truncated (on the range of $[0,\infty)$) normally distributed random variables with means $\mu_c$ and $\mu_w$, respectively, and standard deviations equal to $5$ Mbps.  
  We assume that the cellular usage price $p(l,1), \, \forall \, l \in \mathcal{L}$ is US \$$6$/Gbyte, while the Wi-Fi is free such that $p(l,2) = 0, \, \forall \, l \in \mathcal{L}^{(1)}$.
  The probability that a Wi-Fi connection is available at a particular location is $0.5$.
  The length of a time slot $\Delta t$ equals to ten seconds.
  We consider that the MU is transferring a file (e.g., a movie), where the deadline of the file transfer is $D$ minutes (so $T = 60 D / \Delta t)$.
  We set the file size granularity $\sigma = 10$ Mbits.
  For the delay violation penalty, we use the convex function
\begin{equation} \label{equ:convexpenalty}
	h(k) = b \, k^2, \quad \forall \, k \in \mathcal{K},
\end{equation}
where $b \geq 0$ is a constant. As an example, we adopt $b = 1$ in our simulations.

  Next we explain the five schemes in our simulations.
  Under the general DAWN scheme, we run the planning phase in Algorithm \ref{algo:offload} with the complete and accurate data rate information $\mu(l,1), \, \forall \, l \in \mathcal{L}$ and $\mu(l,2), \, \forall \, l \in \mathcal{L}^{(1)}$.
  For the monotone DAWN algorithm, however, we assume that the MU only knows the mean data rates in the networks. As a result, we run the planning phase in Algorithm \ref{algo:offload_threshold} with incomplete data rate information by letting $\mu(l,1) = \mu_c, \, \forall \, l \in \mathcal{L}$ and $\mu(l,2) = \mu_w, \, \forall \, l \in \mathcal{L}^{(1)}$.
  Under the \emph{no offloading} scheme, the MU uses the cellular network at all times.
  For the \emph{on-the-spot offloading} (OTSO) scheme, the data traffic is offloaded to the Wi-Fi network whenever Wi-Fi is available. The MU will use the cellular connection immediately when Wi-Fi is not available. 
  The \emph{Wiffler} scheme is a prediction-based offloading scheme proposed in \cite{balasubramanian_am10}. Let $\zeta$ be the estimated amount of data that can be transferred using Wi-Fi by the deadline. The Wiffler system uses a history-based predictor, which estimates $\zeta$ based on the inter-meeting time and throughput of the last $m$ Wi-Fi AP encounters.
  If Wi-Fi is available in the current location, then Wi-Fi will be used immediately. If Wi-Fi is not available, the MU needs to check whether the condition $\zeta \geq \theta k$ is satisfied. Here, $k$ is the remaining size of the file to be transferred, and $\theta > 0$ is the conservative coefficient that tradeoffs the amount of data offloaded with the completion time of the file transfer. If this condition is satisfied, meaning that the estimated data transfer using Wi-Fi is large enough, then the MU will stay idle and wait for the Wi-Fi connection. Otherwise, the MU will use the cellular connection. We set $\theta = 1$ and $m = 4$ as suggested in \cite{balasubramanian_am10}.

\subsection{Comparisons Among Different Schemes}

\begin{figure}[t]
\centering
\includegraphics[width=7cm]{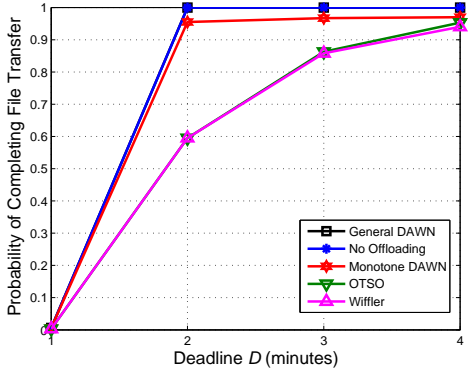}
\caption{The probability of completing file transfer versus deadline $D$ for $p(l \,|\, l) = 0.6, \, \forall \, l \in \mathcal{L}$, $K = 750$ Mbytes, $\mu_c = 90$ Mbps, and $\mu_w = 20$ Mbps.} 
\label{fig:probdeadline}
\end{figure}

\begin{figure}[t]
\centering
\includegraphics[width=7cm]{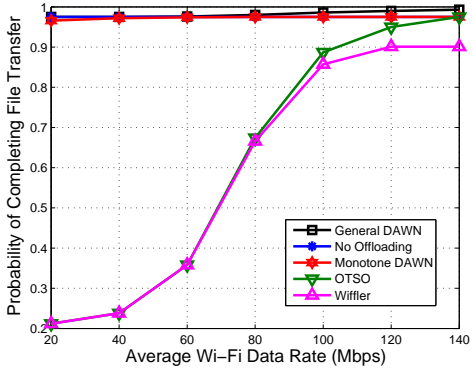} 
\caption{The probability of completing file transfer versus average Wi-Fi data rate $\mu_w$ for $p(l \,|\, l) = 0.6, \, \forall \, l \in \mathcal{L}$, $K = 625$ Mbytes, deadline $D = 1$ min, and $\mu_c = 90$ Mbps.}
\label{fig:probwifirate}
\end{figure}

  In this subsection, we compare the performance of the five schemes (two proposed in this paper and three benchmark schemes) under stringent and non-stringent deadline requirements.
  First, we consider a larger file size $K = 750$ Mbytes, which is challenging to complete the transmission when the deadline is short. 
  Here, we first focus on the special case, where the cellular data rates are much higher than the Wi-Fi data rates. Specifically, we consider that the mean cellular and Wi-Fi data rates are $\mu_c = 90$ Mbps and $\mu_w = 20$ Mbps, respectively, which are reasonable parameters under a 4G LTE-A cellular system \cite{wiki_4g} and a congested Wi-Fi network \cite{ieee80211std}.
  Our later simulation results will show how the performance of the algorithms changes with different Wi-Fi data rates.   
  In Fig. \ref{fig:probdeadline}, we plot the probability of completing file transfer against deadline $D$. As $D$ increases, it is more likely to finish the file transfer before the deadline, so the probability of completing file transfer of the five schemes increases.
  Moreover, we observe that the general DAWN and no offloading schemes achieve the highest probability of completing file transfer, and the monotone DAWN scheme achieves a slightly lower probability. 
  On the other hand, we observe that the OTSO and Wiffler schemes are not able to complete the file transfer around 40\% of time when $D = 2$ mins. 
   The reason is that these two schemes always offload the traffic to the Wi-Fi networks whenever Wi-Fi is available. However, they ignore the QoS requirement of the application in terms of the stringent deadline. When the cellular data rate is higher than the Wi-Fi data rate, it may be preferable to use the cellular network to increase the chance of file transfer completion despite of the higher payment.  
   
  Since the result in Fig. \ref{fig:probdeadline} depends on the relative data rates in the cellular and Wi-Fi networks, we evaluate the probability of completing file transfer against the average Wi-Fi data rate $\mu_w$ under fixed average cellular data rate $\mu_c = 90$ Mbps for $K = 625$ Mbytes and deadline $D = 1$ min in Fig. \ref{fig:probwifirate}.
  \rev{As we can see, when $\mu_w$ increases, the probability of completing file transfer of the OTSO scheme approaches to that of the general DAWN scheme. It is because when the Wi-Fi data rate is much higher than the cellular data rate, OTSO becomes the optimal offloading decision, as Wi-Fi networks are free and have a higher data rate than the cellular network.}

\begin{figure}[t]
\centering
\includegraphics[width=7cm]{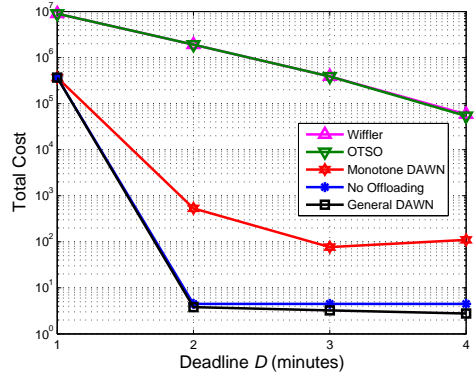}
\caption{Total cost versus deadline $D$ for $p(l \,|\, l) = 0.6, \, \forall \, l \in \mathcal{L}$, $K = 750$ Mbytes, $\mu_c = 90$ Mbps, and $\mu_w = 20$ Mbps.} 
\label{fig:costdeadline}
\end{figure}

\begin{figure}[t]
\centering
\includegraphics[width=7cm]{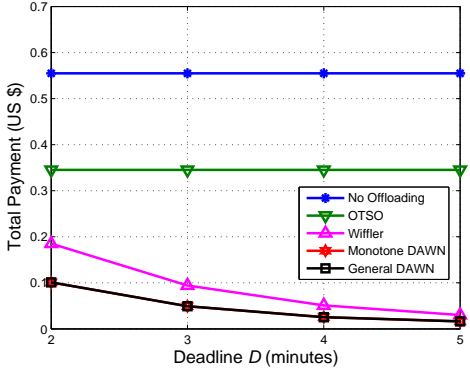}
\caption{The total data usage payment of the user versus deadline $D$ for $p(l \,|\, l) = 0.6, \, \forall \, l \in \mathcal{L}$, $K = 92.5$ Mbytes, $\mu_c = 90$ Mbps, and $\mu_w = 20$ Mbps.} 
\label{fig:paymentdeadline}
\end{figure}

  In Fig. \ref{fig:costdeadline}, we plot the total cost (i.e., the objective function in problem \eqref{equ:problem}) against the deadline $D$ for $K = 750$ Mbytes. Since the general DAWN scheme computes and obtains the optimal policy, it achieves the minimal total cost as stated in Theorem \ref{thm:dp}.  
  Moreover, we observe that the total cost decreases with $D$ for most of the schemes. The reason is that as $D$ increases, the MU has more time to wait for the availability of free Wi-Fi, and thus reduces the total payment. Moreover, for a larger $D$, the chance of completing the file transfer is higher, and the penalty is thus smaller. 
  For the monotone DAWN scheme, however, we observe a slight increase in the total cost at $D = 3$ mins, which is probably due to the incomplete data rate information described above.
  As shown in Fig. \ref{fig:costdeadline}, the general DAWN has a lower total cost than the no offloading scheme, which implies that the general DAWN requires a lower total payment to achieve the same highest probability of completing file transfer as the no offloading scheme does illustrated in Fig. \ref{fig:probdeadline}.

\begin{figure}[t]
 \centering
   \includegraphics[width=7cm]{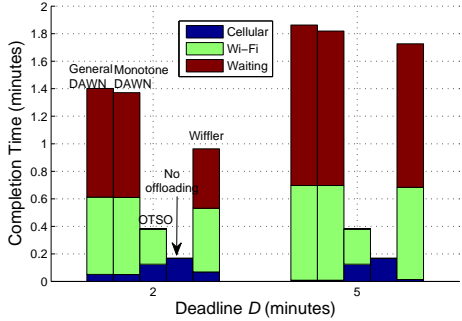} 
 \caption{The average completion time versus deadline $D$ for $K = 92.5$ Mbytes, $\mu_c = 90$ Mbps, and $\mu_w = 20$ Mbps.} 
\label{fig:completiontime}
\end{figure}

  \rev{Next, we consider the case with a non-stringent deadline requirement due to a smaller file size $K = 92.5$ Mbytes, where all the schemes have a very high probability of completing the file transfer in this setup.
  In Fig. \ref{fig:paymentdeadline}, we plot the total payment against deadline $D$ under the five schemes.}
  For the no offloading scheme, since it always uses the more expensive cellular network, its payment is the highest and is independent of $D$.
  For the OTSO scheme, it has a lower payment than the no offloading scheme, because it uses the free Wi-Fi networks whenever they are available. However, the OTSO scheme is not aware of the deadline, so it often incurs a significant penalty for violating the deadline.
  In contrast, the general DAWN, monotone DAWN, and Wiffler schemes are deadline-aware, where they evaluate the chance of file transfer completion by the deadline. When $D$ increases, these three schemes use the Wi-Fi network more often to complete the file transfer, so the total payment decreases.
  In Fig. \ref{fig:paymentdeadline}, we observe that the monotone DAWN achieves the same lowest payment as the general DAWN.

\begin{figure}[t]
\centering
\includegraphics[width=7cm]{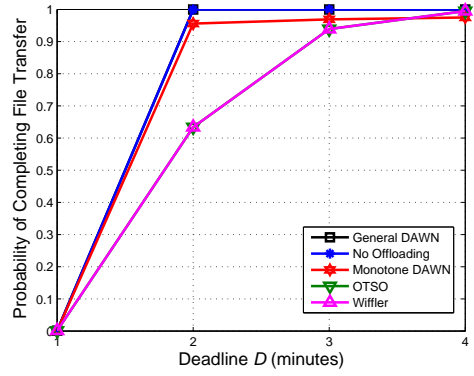}
\caption{The probability of completing file transfer versus deadline $D$ for $p(l \,|\, l) = 0.1, \, \forall \, l \in \mathcal{L}$, $K = 750$ Mbytes, $\mu_c = 90$ Mbps, and $\mu_w = 20$ Mbps.} 
\label{fig:probdeadline_rev}
\end{figure}

\begin{figure}[t]
\centering
\includegraphics[width=7cm]{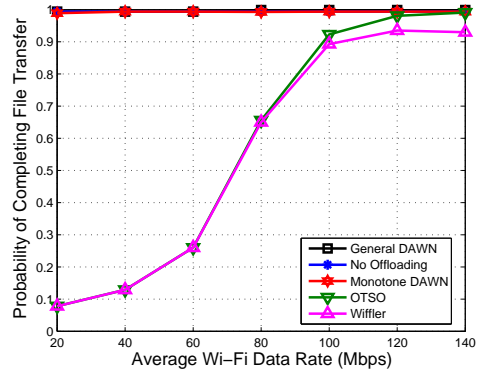} 
\caption{The probability of completing file transfer versus average Wi-Fi data rate $\mu_w$ for $p(l \,|\, l) = 0.1, \, \forall \, l \in \mathcal{L}$, $K = 625$ Mbytes, deadline $D = 1$ min, and $\mu_c = 90$ Mbps.}
\label{fig:probwifirate_rev}
\end{figure}

  \rev{We study in more details on how time is spent before completing the file transfer under a non-stringent deadline requirement as in the setup in Fig. \ref{fig:paymentdeadline}.
	In Fig. \ref{fig:completiontime}, we plot the average completion time of the five schemes for $K = 92.5$ Mbytes under deadlines $D = 2$ mins and $D = 5$ mins. Notice that the completion time includes three parts: cellular transmission time (blue), Wi-Fi transmission time (green), and waiting time for Wi-Fi networks (brown). 
	We can see that the no offloading and OTSO schemes have the shortest lengths of completion time, because of the zero waiting time. 
	On the other hand, the Wiffler, monotone DAWN, and general DAWN schemes experience longer lengths of completion time due to the more significant waiting time. When the deadline is longer ($D = 5$ mins), these three schemes can tolerate a longer delay to wait for the availability of free Wi-Fi networks, and reduce their cellular transmissions, and thus their payments as shown in in Fig. \ref{fig:paymentdeadline}.} 

  \rev{In Figs. \ref{fig:probdeadline_rev}-\ref{fig:paymentdeadline_rev}, we run the simulation experiments in Figs. \ref{fig:probdeadline}-\ref{fig:paymentdeadline} again with different user movement probabilities, where the probability of staying at a location $p(l \,|\, l) = 0.1, \, \forall \, l \in \mathcal{L}$. We can see that the general trends of the curves and insights remain the same, although the magnitude of the performance metrics differ. It suggests that the values of the movement probabilities do not have a significant impact on the insights of our simulation results.}

\begin{figure}[t]
\centering
\includegraphics[width=7cm]{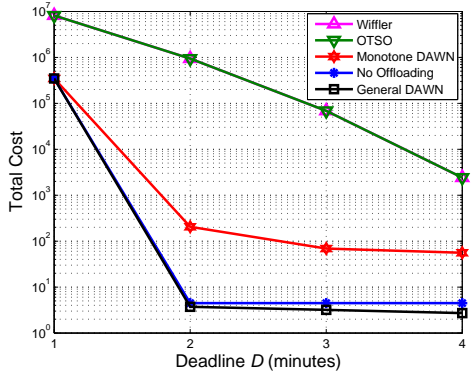}
\caption{Total cost versus deadline $D$ for $p(l \,|\, l) = 0.1, \, \forall \, l \in \mathcal{L}$,  $K = 750$ Mbytes, $\mu_c = 90$ Mbps, and $\mu_w = 20$ Mbps.}
\label{fig:costdeadline_rev}
\end{figure}

\begin{figure}[t]
\centering
\includegraphics[width=7cm]{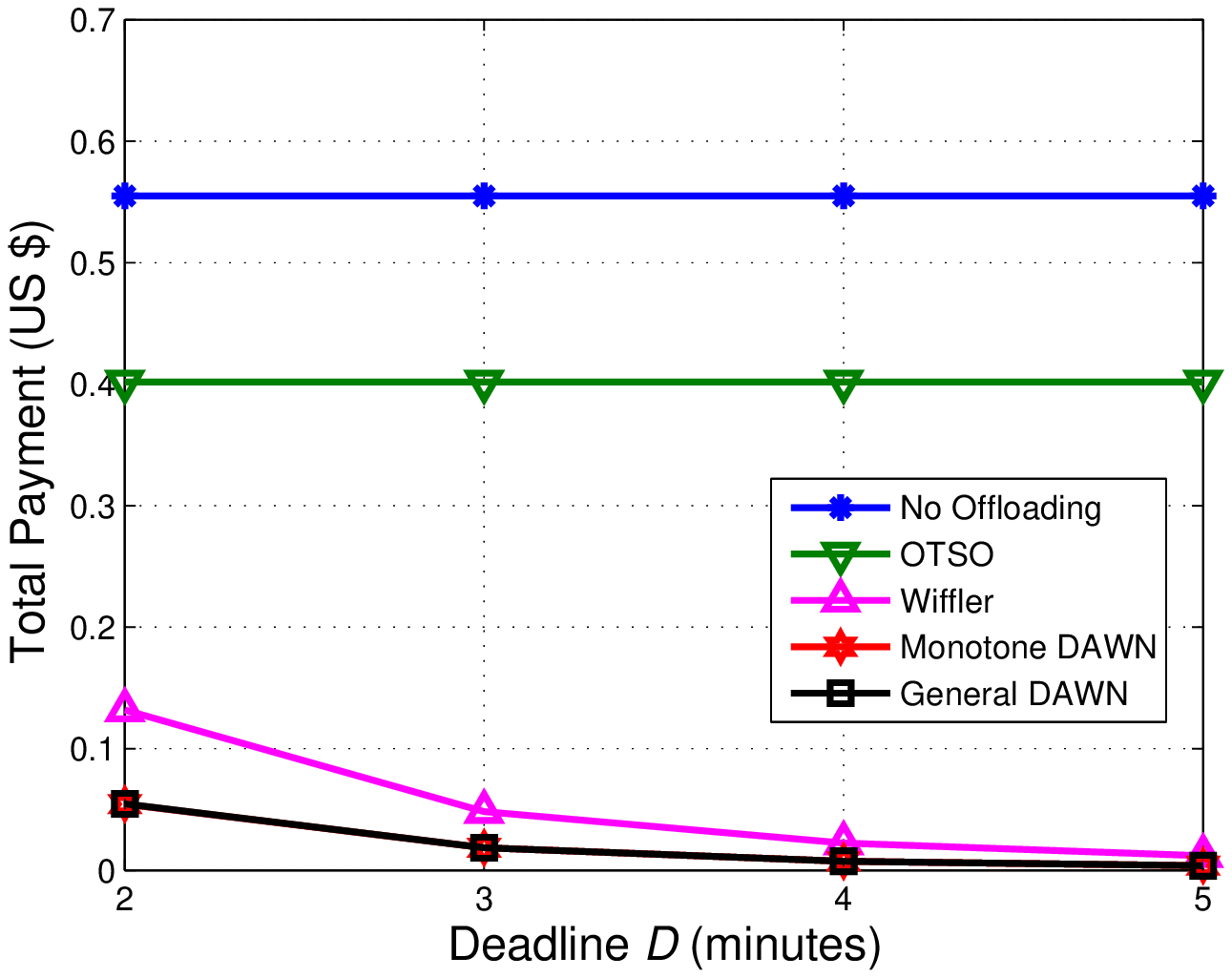}
\caption{The total data usage payment of the user versus deadline $D$ for $p(l \,|\, l) = 0.1, \, \forall \, l \in \mathcal{L}$, $K = 92.5$ Mbytes, $\mu_c = 90$ Mbps, and $\mu_w = 20$ Mbps.}
\label{fig:paymentdeadline_rev}
\end{figure}

\begin{figure}[t] 
\centering
\subfigure[$l \in \mathcal{L}^{(0)}$ for $\mu_1 = 2$ Mbps.]{
   \includegraphics[width=7cm]{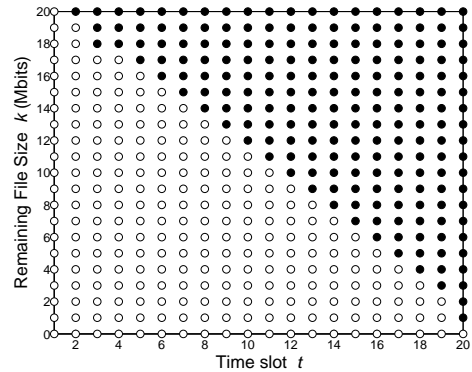}
   \label{fig:threshold_fig1}
 }
\subfigure[$l \in \mathcal{L}^{(1)}$ for $\mu_1 = 2$ Mbps and $\mu_2 = 1$ Mbps.]{
   \includegraphics[width=7cm]{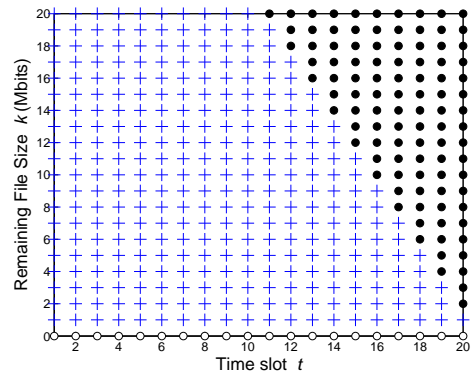}
   \label{fig:threshold_fig2}
 } 
 \caption{An example of the optimal policy at location $l \in \mathcal{L}$ for the case with convex penalty and location-independent data rates, where $K = 20$ Mbits, $\sigma = 1$ Mbits, $T = 20$, and $b = 10$. The white dots ($\circ$), black dots ($\bullet$), and blue crosses (\textcolor{blue}{+}) represent the transmission decisions of $a = 0$ (idle), $1$ (use cellular), and $2$ (use Wi-Fi), respectively. We can observe the threshold optimal policy as stated in Theorem \ref{thm:threshold} and Theorem \ref{thm:threshold_t}.} 
\label{fig:threshold}
\end{figure}

\begin{figure}[t] 
\centering
\subfigure[$l \in \mathcal{L}^{(0)}$ for $\mu(l,1) = 2.1$ Mbps.]{
   \includegraphics[width=7cm]{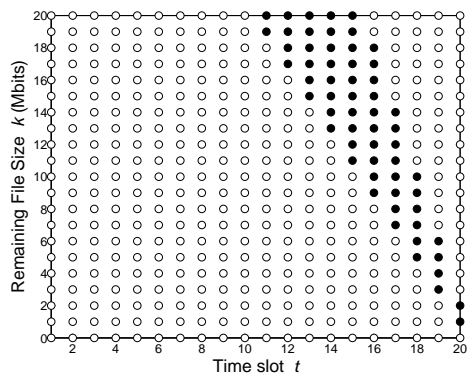}
   \label{fig:nonthreshold_fig1}
 }
\subfigure[$l \in \mathcal{L}^{(1)}$ for $\mu(l,1) = 3.1$ Mbps and $\mu(l,2) = 2.1$ Mbps.]{
   \includegraphics[width=7cm]{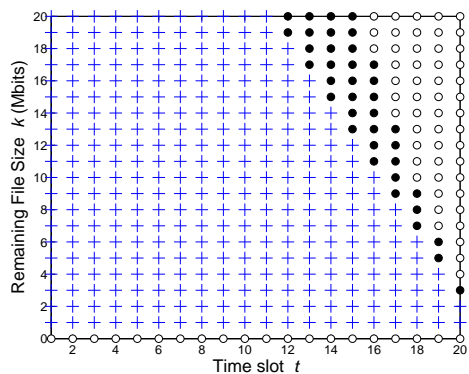}
   \label{fig:nonthreshold_fig2}
 } 
 \caption{An example of the optimal policy at location $l \in \mathcal{L}$ for the case with step penalty and location-dependent data rates, where $K = 20$ Mbits, $\sigma = 1$ Mbits, $T = 20$, and $Z = 100000$. The white dots ($\circ$), black dots ($\bullet$), and blue crosses (\textcolor{blue}{+}) represent the transmission decisions of $a = 0$ (idle), $1$ (use cellular), and $2$ (use Wi-Fi), respectively.} 
\label{fig:nonthreshold}
\end{figure}

\subsection{Demonstration of the Optimal Policy under Different Penalty Functions}

  In addition, we illustrate the actions of the optimal policy for different system states with a simple example. We first look at the special case with convex penalty function $h(k)$ and location-independent data rates $\mu_1$ and $\mu_2$, and cost $q = 1$ in \eqref{equ:cost} for $K = 20$ Mbits, $\sigma = 1$ Mbits, $T = 20$, and $b = 10$. 
  In Figures \ref{fig:threshold_fig1} and \ref{fig:threshold_fig2}, we can observe the threshold structure in dimensions $k$ and $t$ as stated in \eqref{equ:threshold0} and \eqref{equ:threshold_time0} for $l \in \mathcal{L}^{(0)}$ and in \eqref{equ:threshold2} and \eqref{equ:threshold_time2} for $l \in \mathcal{L}^{(1)}$ with $\mu_2 \leq \mu_1$ as stated in Theorem \ref{thm:threshold}. We can also notice the change in thresholds as stated in Theorem \ref{thm:threshold_t}. 
  
  Finally, we show an example of the optimal policy for the general case with non-convex penalty function $h(k)$ and location-dependent cellular/Wi-Fi data rates. 
  We consider a step penalty function $h(k) = Z$ for $k > 0$ and $h(0) = 0$, where $Z >> 1$ is a large positive constant. With this penalty function, the objective is to complete the file transfer with the minimal cellular usage. We adopt $Z = 100000$. 
  As shown in Fig. \ref{fig:nonthreshold}, we can see that multiple thresholds exist along dimension $k$, instead of a single threshold in the special case.
  For example, in Fig. \ref{fig:nonthreshold}(a), for $t \geq 16$, when $k$ is increased from zero, the decision first changes from idle to using cellular, because a complete file transfer is still possible.
  However, when $k$ is increased further that a complete file transfer is impossible, the idle action is chosen.
  Notice that it is very different from the policy in the special case as stated in Theorem \ref{thm:threshold}, where the MU would not stay idle even when there is no chance to complete the file transfer.
	To sum up, the penalty function has a significant impact on the optimal policy, and it should be chosen carefully according to the QoS requirement of the application.

\section{Conclusions} \label{sec:concl}

  In this paper, we studied the user-initiated Wi-Fi offloading problem for delay-tolerant applications under usage-based pricing. The user aims to minimize its total data usage payment, while taking into account the deadline of the file transfer.
  We first proposed a general DAWN algorithm for the general case using dynamic programming.
  We then established sufficient conditions under which the optimal policy has a threshold structure in both dimensions $k$ and $t$. As a result, we proposed a monotone DAWN algorithm with a lower complexity that approximately solves the general offloading problem.
  It should be noted that the proposed algorithms are highly non-trivial, and they cannot be obtained simply by a standard application of dynamic programming.
  Contrary to the practices in some heuristic schemes that favour offloading traffic to Wi-Fi networks whenever possible, 
  our simulation results showed that it is not always optimal for a user to perform Wi-Fi offloading when the deadline requirement is stringent and the data rate in the cellular network is much higher than that in the Wi-Fi network (e.g. a 4G LTE-A cellular system versus a congested Wi-Fi network).
  On the other hand, when the file transfer can be completed easily by the deadline, the delay-aware design in DAWN and Wiffler helps reduce the payment of the users.
  Overall, our results suggested that future cellular and Wi-Fi integration system should include dynamic offloading policies that take into account the users' QoS and the real-time network loads, instead of using simplistic and static offloading policies.

  As we considered the user-initiated offloading, where users are usually self-interested, we focused on the offloading decision of a single user. We believe that it is an important step towards a better understanding of the multi-user offloading problem. 
  In fact, we have made a step forward by considering the interactions of the network selection and data offloading decisions of multiple users in \cite{cheung_ca14}. However, in \cite{cheung_ca14}, we assumed that the mobility trajectory of each MU is estimated accurately, and we did not consider the delay-tolerant application with a given deadline. In other words, it is not possible to use the approach in \cite{cheung_ca14} to directly generalize the results  in this paper to the multi-user case.
  
	\rev{In this work, we have focused on the single file transfer by a given deadline. For future work, we will consider the case of multiple file transfers at the same time, and solve the problem by dynamic programming with additional states and decision variables. Considering the challenges of analyzing the single file case as in this paper, obtaining closed-form analysis and low complexity heuristic with clear engineering insights will be very challenging. 
	Moreover, in this paper, we consider Markovian user mobility model. It is an interesting direction for future research by considering other mobility models, especially the heavy-tailed distribution model \cite{rhee_ot11}, which has shown to be more accurate for modeling human mobilities.}

\bibliographystyle{IEEEtran}
\bibliography{IEEEabrv,mybibfile}

\begin{IEEEbiography}
[{\includegraphics[width=1in,height=1.25in,clip,keepaspectratio]{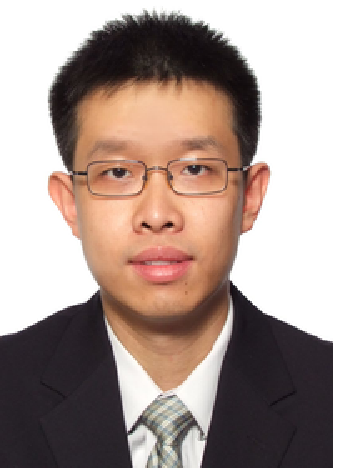}}]
{Man Hon Cheung} received the B.Eng. and M.Phil. degrees in Information Engineering from the Chinese University of Hong Kong (CUHK) in 2005 and 2007, respectively, and the Ph.D. degree in Electrical and Computer Engineering from the University of British Columbia (UBC) in 2012.
 Currently, he is a postdoctoral fellow in the Department of Information Engineering in CUHK.
 He received the IEEE Student Travel Grant for attending {\it IEEE ICC 2009}. He was awarded the Graduate Student International Research Mobility Award by UBC, and the Global Scholarship Programme for Research Excellence by CUHK.
 He serves as a Technical Program Committee member in {\it IEEE ICC}, {\it Globecom}, and {\it WCNC}.
 His research interests include the design and analysis of wireless network protocols using optimization theory, game theory, and dynamic programming, with current focus on mobile data offloading, mobile crowd sensing, and network economics.
\end{IEEEbiography}

\begin{IEEEbiography}[{\includegraphics[width=1in,height=1.25in,clip,keepaspectratio]{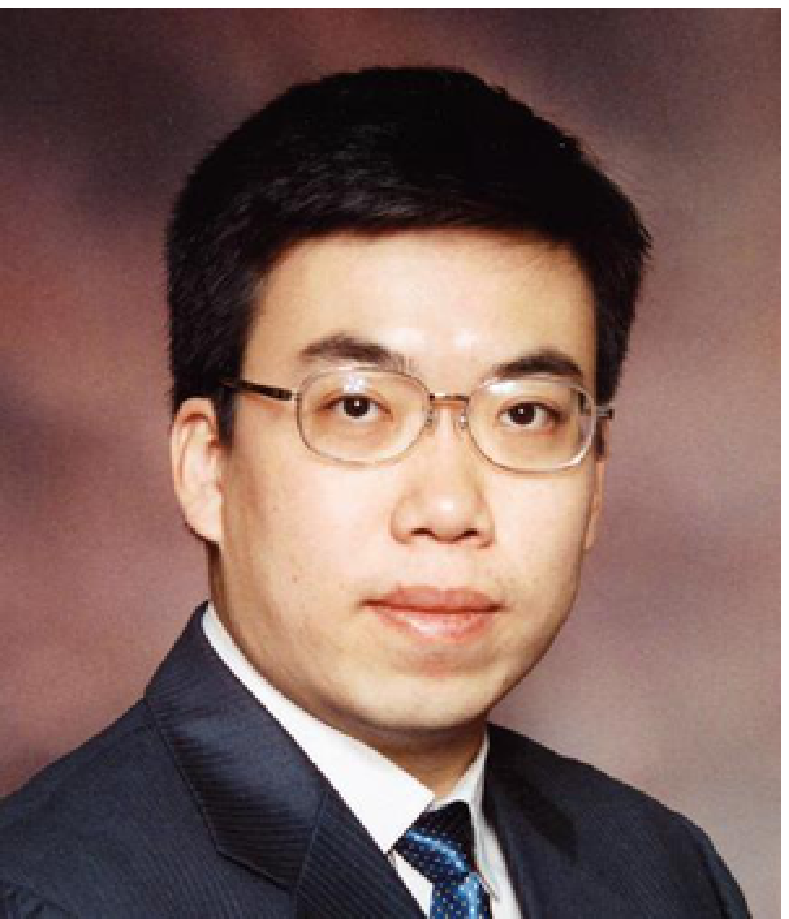}}]
{Jianwei Huang} (S'01-M'06-SM'11) is an Associate Professor and Director of the Network Communications and Economics Lab (ncel.ie.cuhk.edu.hk), in the Department of Information Engineering at the Chinese University of Hong Kong. He received the Ph.D. degree from Northwestern University in 2005. He is the co-recipient of 7 Best Paper Awards, including IEEE Marconi Prize Paper Award in Wireless Communications in 2011. He has co-authored four books: \emph{Wireless Network Pricing}, \emph{Monotonic Optimization in Communication and Networking Systems}, \emph{Cognitive Mobile Virtual Network Operator Games}, and \emph{Social Cognitive Radio Networks}. He has served as an Editor of IEEE Transactions on Cognitive Communications and Networking, IEEE Journal on Selected Areas in Communications - Cognitive Radio Series, and IEEE Transactions on Wireless Communications. He has served as a Guest Editor of IEEE Transactions on Smart Grid, IEEE Journal on Selected Areas in Communications, IEEE Communications Magazine, and IEEE Network. He has served as Associate Editor-in-Chief of IEEE Communications Society Technology News, Chair of IEEE Communications Society Multimedia Communications Technical Committee, and Vice Chair of IEEE Communications Society Cognitive Network Technical Committee. He is a Senior Member and a Distinguished Lecturer of IEEE Communications Society.
\end{IEEEbiography}


\newpage
\appendix  

%

\subsection{Proof of Lemma \ref{lem:vnondeck}} \label{app:vnondeck}   
 
\noindent (a) We prove it by induction. 
  First, from (\ref{equ:boundary}), $v_{T+1}(k,l) = h(k)$ is a non-decreasing function in $k$, $\forall \, l \in \mathcal{L}$.
  Assume that $v_{t+1}(k,l)$ is a non-decreasing function in $k$, $\forall \, l \in \mathcal{L}$.
  From (\ref{equ:opteq2_2}), since $p(l' \,|\, l) \geq 0, \, \forall \, l, l' \in \mathcal{L}$ and the function $I(a = 1)q$ is independent of $k$, $\psi_t(k,l,a)$ is a non-decreasing function in $k$, $\forall \, l \in \mathcal{L}, a \in \mathcal{A}$.   
  Thus, $v_t(k,l)$ in (\ref{equ:opteq}) is a non-decreasing function in $k$, $\forall \, l \in \mathcal{L}$. 
  
\noindent  (b) We prove it by induction. 
  First, for $t = T$, we have
\opt{opt1}{
\begin{equation}
	v_T(k,l) = \min_{a \in \mathcal{A}^{(l)}} \{ \psi_T(k, l, a) \} \leq \psi_T(k, l, 0) 
	= \sum_{l' \in \mathcal{L}} p(l' \,|\, l) v_{T+1}(k,l') 
	= h(k) = v_{T+1}(k,l).
\end{equation}
}
\opt{opt2}{
\begin{equation}
\begin{split}
	v_T(k,l) = \min_{a \in \mathcal{A}^{(l)}} \{ \psi_T(k, l, a) \} \leq \psi_T(k, l, 0) \\
	= \sum_{l' \in \mathcal{L}} p(l' \,|\, l) v_{T+1}(k,l')
	= h(k) = v_{T+1}(k,l).  \hspace{-0.5cm}
\end{split}	
\end{equation}
}
The first and second equalities are from \eqref{equ:opteq} and \eqref{equ:opteq2_2}, and the last two equalities are from \eqref{equ:boundary}.
  Assume that $v_{t+1}(k,l)$ is a non-decreasing function in $t$, $\forall \, k \in \mathcal{K}, l \in \mathcal{L}$.
  From \eqref{equ:opteq2_2}, since $p(l' \,|\, l) \geq 0, \, \forall \, l, l' \in \mathcal{L}$ and the function $I(a = 1)q$ is independent of $t$,  $\psi_t(k,l,a)$ is a non-decreasing function in $t$, $\forall \, k \in \mathcal{K}, l \in \mathcal{L}, a \in \mathcal{A}$.
  Thus, $v_t(k,l)$ in \eqref{equ:opteq} is a non-decreasing function in $t$, $\forall \, k \in \mathcal{K}, l \in \mathcal{L}$.
  \hfill \QEDclosed

\subsection{Proof of Lemma \ref{lem:wifi}} \label{app:wifi} 

	Let $k \in \mathcal{K}$ and $l \in \mathcal{L}$ be given.\\	
	(a) We have 
\opt{opt1}{
\begin{equation} \label{equ:comparepsi02}
	\psi_t(k,l,0) 
	= \sum_{l' \in \mathcal{L}} p(l' \,|\, l) \, v_{t+1}(k,l') 
	\geq \sum_{l' \in \mathcal{L}} p(l' \,|\, l) \, v_{t+1} \bigl( [k - \mu(l,2)]^+, l' \bigr)
	= \psi_t(k,l,2),
\end{equation}
}
\opt{opt2}{
\begin{equation} \label{equ:comparepsi02}
\begin{split}
	\psi_t(k,l,0) 
	= \sum_{l' \in \mathcal{L}} p(l' \,|\, l) \, v_{t+1}(k,l') \quad\quad\quad\quad\quad\quad\quad \\
	\geq \sum_{l' \in \mathcal{L}} p(l' \,|\, l) \, v_{t+1} \bigl( [k - \mu(l,2)]^+, l' \bigr)
	= \psi_t(k,l,2),
\end{split}
\end{equation}
}
where the two equalities are due to \eqref{equ:opteq2_2} and the inequality is due to Lemma \ref{lem:vnondeck}.\\   
  (b) First, since $\mu(l,1) \leq \mu(l,2)$, we have 
\opt{opt1}{
\begin{equation} \label{equ:comparepsi12}
	\psi_t(k,l,1)
	= q + \sum_{l' \in \mathcal{L}} p(l' \,|\, l) \, v_{t+1} \bigl( [k - \mu(l,1)]^+, l' \bigr) 
	\geq \sum_{l' \in \mathcal{L}} p(l' \,|\, l) \, v_{t+1} \bigl( [k - \mu(l,2)]^+, l' \bigr)
	= \psi_t(k,l,2),
\end{equation}
}
\opt{opt2}{
\begin{equation} \label{equ:comparepsi12}
\begin{split}
	\psi_t(k,l,1)
	= q + \sum_{l' \in \mathcal{L}} p(l' \,|\, l) \, v_{t+1} \bigl( [k - \mu(l,1)]^+, l' \bigr) \quad\\
	\geq \sum_{l' \in \mathcal{L}} p(l' \,|\, l) \, v_{t+1} \bigl( [k - \mu(l,2)]^+, l' \bigr)
	= \psi_t(k,l,2),
\end{split}	
\end{equation}
}
where the two equalities are due to \eqref{equ:opteq2_2} and the inequality is due to Lemma \ref{lem:vnondeck}.
  Combining the results from \eqref{equ:comparepsi02} and \eqref{equ:comparepsi12}, from \eqref{equ:deltat}, we have $\delta_t^*(k,l) = 2, \, \forall \, k \in \mathcal{K}, t \in \mathcal{T}$. \hfill \QEDclosed

\subsection{Superadditivity and Subadditivity of $\psi_t(k,l,a)$} \label{app:additive}   

  The proof of the threshold structure in dimension $k$ in Theorem \ref{thm:threshold} is based on the results in Lemmas \ref{lem:v_diff} and \ref{lem:subadditive}.
  Let $l \in \mathcal{L}$ be given. Let $\tilde{\mathcal{A}}^{(l)} = \{j,1\}$, where $j = 0$ if $l \in \mathcal{L}^{(0)}$ and $j = 2$ if $l \in \mathcal{L}^{(1)}$ as in \eqref{equ:setal_2}, and $\mu_0 = 0$. 
  With only two possible actions in $\tilde{\mathcal{A}}^{(l)}$, we can rewrite \eqref{equ:opteq2_2} as
\opt{opt1}{
\begin{equation} \label{equ:psi}
	\psi_t(k,l,a) = I(a = 1)q +  \sum_{l' \in \mathcal{L}} p(l' \,|\, l) \, \Bigl[ I(a = 1) v_{t+1} \bigl( [k - \mu_1]^+, l' \bigr) + \bigl(1 - I(a = 1)\bigr)  v_{t+1} \bigl( [k - \mu_j]^+, l' \bigr)  \Bigr].
\end{equation}
}
\opt{opt2}{
\begin{equation} \label{equ:psi}
\begin{split}
	\psi_t(k,l,a) = I(a = 1)q +  \sum_{l' \in \mathcal{L}} p(l' \,|\, l) \, \Bigl[ I(a = 1) \quad\quad\quad\quad\quad\quad \\ 
	\times v_{t+1} \bigl( [k - \mu_1]^+\!\!, l' \bigr) \! + \! \bigl(1 - I(a = 1)\bigr)  v_{t+1} \bigl( [k - \mu_j]^+\!\!, l' \bigr)  \Bigr]. \quad\quad
\end{split}	
\end{equation}
}

\begin{lemma} \label{lem:v_diff}
  If $\mu_j \leq \mu_1$ and $h(k)$ is a convex and non-decreasing function in $k$, then 
\opt{opt1}{
\begin{equation}
\begin{split}
  v_{t}( [ k - \mu_j ]^+ , l) - v_{t}\bigl( [ k - \mu_1 ]^+, l\bigr) 
  \geq v_{t}\bigl( [k - \sigma - \mu_j]^+ , l \bigr) - v_{t}\bigl([k - \sigma - \mu_1]^+,l\bigr), \\
  \quad\quad \forall \, k \in \mathcal{K}, l \in \mathcal{L}, t \in \mathcal{T} \cup \{T+1\}.
\end{split}
\end{equation}
}
\opt{opt2}{
\begin{equation}
\begin{split}
  v_{t}( [ k - \mu_j ]^+ , l) - v_{t}\bigl( [ k - \mu_1 ]^+, l\bigr)
  \geq v_{t}\bigl( [k - \sigma - \mu_j]^+ , l \bigr)  \\
   - v_{t}\bigl([k - \sigma - \mu_1]^+,l\bigr), 
  \forall \, k \in \mathcal{K}, l \in \mathcal{L}, t \in \mathcal{T} \cup \{T+1\}.
\end{split}
\end{equation}
}
\end{lemma}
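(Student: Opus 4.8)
The plan is a backward induction on $t$ proving the stronger statement that, for every $l$, the map $k \mapsto \Phi_t^{(l)}(k) := v_t([k-\mu_j]^+,l) - v_t([k-\mu_1]^+,l)$ is non-decreasing on $\mathcal K$; the displayed inequality is then just the comparison of $\Phi_t^{(l)}(k)$ with $\Phi_t^{(l)}(k-\sigma)$. For the base case $t = T+1$ we have $v_{T+1}(\cdot,l) = h$, and since $h$ is convex, non-decreasing, with $h(0)=0$, the function $g(x):=h([x]^+)$ is convex and non-decreasing; writing $\Phi_{T+1}^{(l)}(k) = g(k-\mu_j) - g(k-\mu_1) = g\bigl(y + (\mu_1-\mu_j)\bigr) - g(y)$ with $y := k - \mu_1$, convexity of $g$ together with $\mu_1 - \mu_j \ge 0$ makes this non-decreasing in $y$, hence in $k$.

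For the inductive step, fix $l$ with reduced action set $\tilde{\mathcal A}^{(l)} = \{j', 1\}$ (so $j' = 0$ if $l \in \mathcal L^{(0)}$ and $j' = 2$ if $l \in \mathcal L^{(1)}$, by \eqref{equ:setal_2}) and abbreviate $W(y) := \sum_{l'} p(l'\,|\,l)\, v_{t+1}([y]^+, l')$ and $\Psi(m) := W(m - \mu_{j'}) - W(m - \mu_1)$. Combining \eqref{equ:psi} with \eqref{equ:opteq_2} and using $\min\{x, y+c\} = c + \min\{x-c,y\}$,
\[
  v_t(m,l) = \min\{\, W(m - \mu_{j'}),\ q + W(m-\mu_1)\,\} = W(m - \mu_1) + \min\{\Psi(m),\, q\}.
\]
Now $\Psi(m) = \sum_{l'} p(l'\,|\,l)\bigl(v_{t+1}([m-\mu_{j'}]^+, l') - v_{t+1}([m-\mu_1]^+, l')\bigr)$ is a non-negative (Lemma \ref{lem:vnondeck}(a)) convex combination of quantities that are non-decreasing in $m$ by the induction hypothesis, so $\Psi$ is non-decreasing. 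When the shift $\mu_j$ appearing in the statement coincides with the location rate $\mu_{j'}$, substitute $m = [k-\mu_j]^+$ and $m = [k-\mu_1]^+$ into the identity above and collapse the nested truncations via $\bigl[[k-\mu_a]^+ - \mu_b\bigr]^+ = [k-\mu_a-\mu_b]^+$; a short computation shows the two $W$-terms combine into $\Psi([k-\mu_1]^+)$, leaving
\[
  \Phi_t^{(l)}(k) = \bigl[\Psi([k-\mu_1]^+) - q\bigr]^+ + \min\bigl\{\Psi([k-\mu_j]^+),\, q\bigr\}.
\]
Since $\mu_j \le \mu_1$ and $\Psi$ is non-decreasing, both $k \mapsto \Psi([k-\mu_1]^+)$ and $k \mapsto \Psi([k-\mu_j]^+)$ are non-decreasing, and $x \mapsto [x-q]^+$ and $x \mapsto \min\{x,q\}$ are non-decreasing; hence $\Phi_t^{(l)}$ is non-decreasing in $k$.

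The point to watch is that $v_t(\cdot,l)$ is in general \emph{not} convex: the $\min$ over actions creates a concave kink at the threshold, so the claim cannot be obtained by merely propagating convexity of $h$. The identity above is precisely what sidesteps this, isolating the non-convex contribution as $\min\{\Psi(\cdot),q\}$ whose relevant first differences are still monotone. The computation just given is self-contained exactly when $\mu_{j'} = \mu_j$; at locations $l \in \mathcal L^{(1)}$ with $\mu_{j'} = \mu_2$ one also needs the $\mu_j = 0$ instance of the statement, where the two $W$-terms no longer merge into a single $\Psi$-value — there one splits on whether $\Psi$ has already crossed the level $q$, and each resulting piece is again a non-negative combination of induction-hypothesis quantities evaluated at non-decreasing arguments. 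Finally, the case $\mu_1 < \mu_2$ at $l \in \mathcal L^{(1)}$ is immediate from Lemma \ref{lem:wifi}(b): there $v_t(m,l) = W(m - \mu_2)$, so $\Phi_t^{(l)}(k) = \sum_{l'} p(l'\,|\,l)\bigl(v_{t+1}([k - \mu_j - \mu_2]^+, l') - v_{t+1}([k - \mu_1 - \mu_2]^+, l')\bigr)$, which is non-decreasing directly by the induction hypothesis. The main obstacle, then, is not any single inequality but the bookkeeping of which difference-of-$v$ quantities must be carried through the induction and the case split forced by the $\min$.
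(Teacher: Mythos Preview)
Your proposal is correct and follows a genuinely different route from the paper. The paper's proof uses the classical ``add-and-subtract'' device: it names the optimal actions $a_1,a_2,a_3,a_4$ at the four states appearing in the inequality, writes the target quantity as $A+B+C-D$ where $B,C\ge 0$ by optimality of $a_3,a_2$, and then shows $A\ge D$ by applying the induction hypothesis termwise at shifted arguments $k-\mu_j$ and $k-\mu_1$. You instead exploit the two-action structure to derive the closed form $v_t(m,l)=W(m-\mu_1)+\min\{\Psi(m),q\}$ and the clean decomposition $\Phi_t^{(l)}(k)=[\Psi([k-\mu_1]^+)-q]^+ + \min\{\Psi([k-\mu_j]^+),q\}$, from which monotonicity is read off by composition. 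Your argument is more transparent about \emph{why} the result holds (it makes the threshold structure visible in the value function itself), while the paper's argument is more mechanical but also more robust to variations in the action set.

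One point worth flagging: you correctly observe that the clean identity only closes up when the shift $\mu_j$ in the statement matches the location rate $\mu_{j'}$, and that the mismatched case (e.g.\ $j=0$ at $l\in\mathcal L^{(1)}$) requires separate handling. You sketch this but do not carry it out; the paper's proof has the \emph{same} hidden dependency --- its $A\ge D$ step at location $l$ invokes the induction hypothesis with shift $\mu_{j(l)}$ at all $l'$, so both proofs are really simultaneous inductions over $j\in\{0,2\}$. Your final sentence (``the bookkeeping of which difference-of-$v$ quantities must be carried through the induction'') names this precisely. The paper's add-and-subtract trick handles the mismatch more uniformly once one accepts the simultaneous induction, since it never needs the two $W$-terms to merge; that is the one thing its approach buys.
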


\begin{proof}
  We prove it by induction. 
	Since $h(k)$ is a non-decreasing convex function, we have
\opt{opt1}{
\begin{equation} \label{equ:convex}
  h([ k - \mu_j ]^+) - h([k - \mu_1]^+) \geq h([k - \sigma - \mu_j]^+) - h([k - \sigma - \mu_1]^+), \quad \forall \, k \in \mathcal{K}.
\end{equation}
}
\opt{opt2}{
\begin{equation} \label{equ:convex}
\begin{split}
  h([ k - \mu_j ]^+) - h([k - \mu_1]^+) \geq h([k - \sigma - \mu_j]^+) \quad \\ - h([k - \sigma - \mu_1]^+), \quad \forall \, k \in \mathcal{K}.
\end{split}  
\end{equation}
}
  Let $k \in \mathcal{K}$, $l \in \mathcal{L}$ be given. For $t = T+1$, we have 
\opt{opt1}{
\begin{equation}
\begin{split}
  v_{T+1}([ k - \mu_j ]^+ , l) - v_{T+1}\bigl([ k - \mu_1 ]^+,l\bigr) 
  = h( [ k - \mu_j ]^+ ) - h\bigl([ k - \mu_1 ]^+ \bigr) \quad\quad\quad\quad\quad\quad\quad\quad \\
  \geq h([k - \sigma - \mu_j]^+) - h([k - \sigma - \mu_1]^+)  
  = v_{T+1}( [k - \sigma - \mu_j]^+ , l) - v_{T+1}\bigl([k - \sigma - \mu_1]^+,l\bigr),\\
\end{split}
\end{equation}
}
\opt{opt2}{
\begin{align} 
  	 & v_{T+1}([ k - \mu_j ]^+ , l) - v_{T+1}\bigl([ k - \mu_1 ]^+,l\bigr) \nonumber \\
= & \; h( [ k - \mu_j ]^+ ) - h\bigl([ k - \mu_1 ]^+ \bigr) \nonumber \\
\geq & \; h([k - \sigma - \mu_j]^+) - h([k - \sigma - \mu_1]^+) \nonumber \\
= & \; v_{T+1}( [k - \sigma - \mu_j]^+ , l) - v_{T+1}\bigl([k - \sigma - \mu_1]^+,l\bigr),
\end{align}
}
where the equalities are due to (\ref{equ:boundary}) and the inequality is due to (\ref{equ:convex}).
  Assume that for a given $t \in \mathcal{T}$, we have 
\opt{opt1}{
\begin{equation} \label{equ:v_diff_t+1}
\begin{split}
  v_{t+1}( [ k - \mu_j ]^+ , l) - v_{t+1}\bigl( [ k - \mu_1 ]^+, l\bigr) 
  \geq v_{t+1}\bigl( [k - \sigma - \mu_j]^+ , l \bigr) - v_{t+1}\bigl([k - \sigma - \mu_1]^+,l\bigr), \\
  \forall \, k \in \mathcal{K}, l \in \mathcal{L}.
\end{split}
\end{equation}  
}
\opt{opt2}{
\begin{equation} \label{equ:v_diff_t+1}
\begin{split}
  v_{t+1}\!( [ k \! - \! \mu_j ]^+ \!\!\!, l) \! - \! v_{t+1}\!\bigl( [ k \! - \! \mu_1 ]^+\!\!\!, l\bigr)
  \! \geq \! v_{t+1}\!\bigl( [k - \sigma - \mu_j]^+ \!\!\!, l \bigr) \\
  - v_{t+1}\bigl([k - \sigma - \mu_1]^+,l\bigr), 
  \forall \, k \in \mathcal{K}, l \in \mathcal{L}. \quad\quad\quad\quad\quad
\end{split}
\end{equation}
}

   From (\ref{equ:opteq}), let actions $a_1$, $a_2$, $a_3$, $a_4 \in \tilde{\mathcal{A}}^{(l)}$ be defined such that
\opt{opt1}{
\begin{eqnarray}
  v_{t}( [ k - \mu_j ]^+ , l) \!\!\! & = \!\!\! & \min_{a \in \tilde{\mathcal{A}}^{(l)}}\{ \psi_t([ k - \mu_j ]^+, l, a) \} = \psi_t([ k - \mu_j ]^+, l, a_1),\\ 			
  v_{t}\bigl( [ k - \mu_1 ]^+, l\bigr) \!\!\! & = \!\!\! & \min_{a \in \tilde{\mathcal{A}}^{(l)}}\{ \psi_t\bigl( [k - \mu_1 ]^+, l, a \bigr) \} = \psi_t\bigl( [ k - \mu_1 ]^+, l, a_2 \bigr), \label{equ:v_spt2} \\
  v_{t}\bigl( [ k - \sigma - \mu_j]^+, l\bigr) \!\!\! & = \!\!\! & \min_{a \in \tilde{\mathcal{A}}^{(l)}}\{ \psi_t\bigl( [ k - \sigma - \mu_j]^+, l, a \bigr) \} = \psi_t\bigl( [ k - \sigma - \mu_j]^+, l, a_3 \bigr), \text{ and} \label{equ:v_spt3} \\ 
  v_{t}\bigl( [ k - \sigma - \mu_1]^+, l\bigr) \!\!\! & = \!\!\! & \min_{a \in \tilde{\mathcal{A}}^{(l)}}\{ \psi_t\bigl( [ k - \sigma - \mu_1]^+, l, a \bigr) \} = \psi_t\bigl( [ k - \sigma - \mu_1]^+, l, a_4 \bigr). 
\end{eqnarray}
}
\opt{opt2}{
\begin{align}
	v_{t}( [ k - \mu_j ]^+ , l) & = \min_{a \in \tilde{\mathcal{A}}^{(l)}}\{ \psi_t([ k - \mu_j ]^+, l, a) \} \nonumber \\
	& = \psi_t([ k - \mu_j ]^+, l, a_1),
\end{align}
\begin{align} \label{equ:v_spt2}
	v_{t}\bigl( [ k - \mu_1 ]^+, l\bigr) & =  \min_{a \in \tilde{\mathcal{A}}^{(l)}}\{ \psi_t\bigl( [k - \mu_1 ]^+, l, a \bigr) \} \nonumber \\
	& = \psi_t\bigl( [ k - \mu_1 ]^+, l, a_2 \bigr), 
\end{align}
\begin{align} \label{equ:v_spt3}
	v_{t}\bigl( [ k - \sigma - \mu_j]^+, l\bigr) & = \min_{a \in \tilde{\mathcal{A}}^{(l)}}\{ \psi_t\bigl( [ k - \sigma - \mu_j]^+, l, a \bigr) \} \nonumber \\
	& = \psi_t\bigl( [ k - \sigma - \mu_j]^+, l, a_3 \bigr), \text{   and}  
\end{align}
\begin{align}
	v_{t}\bigl( [ k - \sigma - \mu_1]^+, l\bigr) & = \min_{a \in \tilde{\mathcal{A}}^{(l)}}\{ \psi_t\bigl( [ k - \sigma - \mu_1]^+, l, a \bigr) \} \nonumber \\
	& = \psi_t\bigl( [ k - \sigma - \mu_1]^+, l, a_4 \bigr).		
\end{align}
}

  We thus have  
\opt{opt1}{
\begin{align} \label{equ:abcd}
	   & v_{t}( [ k - \mu_j ]^+ , l) - v_{t}\bigl( [ k - \mu_1 ]^+, l\bigr)  
  - v_{t}\bigl( [k - \sigma - \mu_j]^+ , l \bigr) + v_{t}\bigl([k - \sigma - \mu_1]^+,l\bigr) \nonumber \\
= 	 & \; \psi_t( [ k - \mu_j ]^+ , l, a_1) - \psi_t\bigl( [ k - \mu_1 ]^+, l, a_2\bigr)  
  - \psi_t\bigl( [k - \sigma - \mu_j]^+ , l, a_3 \bigr) + \psi_t\bigl([k - \sigma - \mu_1]^+,l, a_4\bigr) \nonumber \\
= 	 & \; \underbrace{\psi_t( [ k - \mu_j ]^+ , l, a_1) - \psi_t\bigl( [k - \sigma - \mu_j]^+ , l, a_1 \bigr)}_{A}  \nonumber \\  
		 & +  \underbrace{\psi_t\bigl( [k - \sigma - \mu_j]^+ , l, a_1 \bigr) - \psi_t\bigl( [k - \sigma - \mu_j]^+ , l, a_3 \bigr)}_{B} \nonumber \\
		 & \underbrace{\Bigl( - \, \psi_t\bigl( [ k - \mu_1 ]^+, l, a_2\bigr) + \psi_t\bigl( [ k - \mu_1 ]^+, l, a_4\bigr) \Bigr)}_{C} \nonumber \\
		 & - \Bigl( \underbrace{\psi_t\bigl( [ k - \mu_1 ]^+, l, a_4\bigr) - \psi_t\bigl([k - \sigma - \mu_1]^+,l, a_4\bigr)}_{D} \Bigr) =  A + B + C - D. 
\end{align}
}
\opt{opt2}{
\begin{align} \label{equ:abcd}
	   & v_{t}( [ k - \mu_j ]^+ , l) - v_{t}\bigl( [ k - \mu_1 ]^+, l\bigr)   \nonumber \\
  	 & - v_{t}\bigl( [k - \sigma - \mu_j]^+ , l \bigr) + v_{t}\bigl([k - \sigma - \mu_1]^+,l\bigr) \nonumber \\
= 	 & \; \psi_t( [ k - \mu_j ]^+ , l, a_1) - \psi_t\bigl( [ k - \mu_1 ]^+, l, a_2\bigr) \nonumber \\ 
  	 & - \psi_t\bigl( [k - \sigma - \mu_j]^+ , l, a_3 \bigr) + \psi_t\bigl([k - \sigma - \mu_1]^+,l, a_4\bigr) \nonumber \\
= 	 & \; \underbrace{\psi_t( [ k - \mu_j ]^+ , l, a_1) - \psi_t\bigl( [k - \sigma - \mu_j]^+ , l, a_1 \bigr)}_{A} \nonumber \\
		 & + \underbrace{\psi_t\bigl( [k - \sigma - \mu_j]^+ , l, a_1 \bigr) - \psi_t\bigl( [k - \sigma - \mu_j]^+ , l, a_3 \bigr)}_{B} \nonumber \\  
		 & \underbrace{\Bigl(- \, \psi_t\bigl( [ k - \mu_1 ]^+, l, a_2\bigr) + \psi_t\bigl( [ k - \mu_1 ]^+, l, a_4\bigr) \Bigr)}_{C} \nonumber \\
		 & - \Bigl( \underbrace{\psi_t\bigl( [ k - \mu_1 ]^+, l, a_4\bigr) - \psi_t\bigl([k - \sigma - \mu_1]^+,l, a_4\bigr)}_{D} \Bigr) \nonumber \\
=    & \; A + B + C - D.
\end{align}
}

  We have
\opt{opt1}{
\begin{align} 
A =  & \sum_{l' \in \mathcal{L}} p(l' \,|\, l) \Bigl[ I(a_1 = 1) \bigl[ v_{t+1}([k - \mu_j - \mu_1]^+,l') - v_{t+1}([k - \sigma - \mu_j - \mu_1]^+,l') \bigr] \nonumber \\
     & \quad\quad\quad\quad\quad + \bigl( 1 - I(a_1 = 1) \bigr) \bigl[ v_{t+1}([ k - 2\mu_j ]^+,l') - v_{t+1}([k - \sigma - 2\mu_j]^+,l') \bigr] \Bigr] \nonumber \\ 
\geq &  \sum_{l' \in \mathcal{L}} p(l' \,|\, l) \bigl[ v_{t+1}([k - \mu_j - \mu_1]^+,l') - v_{t+1}([k - \sigma - \mu_j - \mu_1]^+,l') \bigr] \nonumber \\				
\geq & \sum_{l' \in \mathcal{L}} p(l' \,|\, l) \Bigl[ I(a_4 = 1) \bigl[ v_{t+1}([k - 2\mu_1]^+,l') - v_{t+1}([k - \sigma - 2\mu_1]^+,l') \bigr]  \nonumber \\
		 & + \bigl( 1 - I(a_4 = 1) \bigr) \bigl[ v_{t+1}([k - \mu_j - \mu_1]^+ ,l') - v_{t+1}([k - \sigma - \mu_j - \mu_1]^+,l') \bigr] \Bigr] = D,   
\end{align}
}
\opt{opt2}{
\begin{align} 
A =  & \sum_{l' \in \mathcal{L}} p(l' \,|\, l) \Bigl[ I(a_1 = 1) \bigl[ v_{t+1}([k - \mu_j - \mu_1]^+,l')  \nonumber \\
     & - v_{t+1}([k - \sigma - \mu_j - \mu_1]^+,l') \bigr] + \bigl( 1 - I(a_1 = 1) \bigr) \nonumber \\
     &  \times \bigl[ v_{t+1}([ k - 2\mu_j ]^+,l') - v_{t+1}([k - \sigma - 2\mu_j]^+,l') \bigr] \Bigr] \nonumber \\
\geq & \sum_{l' \in \mathcal{L}} p(l' \,|\, l) \bigl[ v_{t+1}([k - \mu_j - \mu_1]^+,l')  \nonumber \\
		 & \hspace{2cm} - v_{t+1}([k - \sigma - \mu_j - \mu_1]^+,l') \bigr]  \nonumber \\		
\geq & \sum_{l' \in \mathcal{L}} p(l' \,|\, l) \Bigl[ I(a_4 = 1) \bigl[ v_{t+1}([k - 2\mu_1]^+,l')  \nonumber \\
		 &  - v_{t+1}([k - \sigma - 2\mu_1]^+,l') \bigr] + \bigl( 1 - I(a_4 = 1) \bigr) \nonumber \\
		 & \hspace{-0.7cm} \times \!\! \bigl[ v_{t+1}([k - \mu_j - \mu_1]^+\!\!,l') - v_{t+1}([k \! - \! \sigma \!- \! \mu_j \! - \! \mu_1]^+\!\!,l') \bigr] \Bigr] \nonumber \\
	=  & \; D,   
\end{align}
}
where the two equalities are obtained by using \eqref{equ:psi} and the two inequalities are obtained due to the induction hypothesis in (\ref{equ:v_diff_t+1}).
  From (\ref{equ:v_spt3}) and (\ref{equ:v_spt2}), we have $B \geq 0$ and $C \geq 0$, respectively. Overall, from (\ref{equ:abcd}), we obtain
\opt{opt1}{
\begin{equation}
	   v_{t}( [ k - \mu_j ]^+ , l) - v_{t}\bigl( [ k - \mu_1 ]^+, l\bigr)  
  - v_{t}\bigl( [k - \sigma - \mu_j]^+ , l \bigr) + v_{t}\bigl([k - \sigma - \mu_1]^+,l\bigr) \geq 0,
\end{equation}
}
\opt{opt2}{
\begin{equation}
\begin{split}
	   v_{t}( [ k - \mu_j ]^+ , l) - v_{t}\bigl( [ k - \mu_1 ]^+, l\bigr)  
  - v_{t}\bigl( [k - \sigma - \mu_j]^+ , l \bigr) \\ + v_{t}\bigl([k - \sigma - \mu_1]^+,l\bigr) \geq 0, \quad\quad\quad
\end{split}
\end{equation}
}
which completes the proof. 
\end{proof}



\vspace{0.2cm}
\begin{lemma} \label{lem:subadditive}
  If $\mu_j \leq \mu_1$ and $\forall \, \hat{k}, \check{k} \in \mathcal{K}, \, l \in \mathcal{L}, \, t \in \mathcal{T}$ with $\hat{k} \geq \check{k}$, where
\opt{opt1}{
\begin{equation}
	v_{t+1}( [\hat{k} - \mu_j]^+ , l) - v_{t+1}\bigl([ \hat{k} - \mu_1 ]^+,l\bigr) \geq v_{t+1}( [\check{k} - \mu_j]^+ , l) - v_{t+1}\bigl([ \check{k} - \mu_1 ]^+,l\bigr),
\end{equation}
}
\opt{opt2}{
\begin{equation}
\begin{split}
	v_{t+1}( [\hat{k} - \mu_j]^+ , l) - v_{t+1}\bigl([ \hat{k} - \mu_1 ]^+,l\bigr) \quad\quad\quad \\ 
	\geq v_{t+1}( [\check{k} - \mu_j]^+ , l) - v_{t+1}\bigl([ \check{k} - \mu_1 ]^+,l\bigr),
\end{split}
\end{equation}
}
then $\psi_t(k,l,a)$ is subadditive on $\mathcal{K} \times \tilde{\mathcal{A}}^{(l)}$ for $j = 0$, and superadditive  on $\mathcal{K} \times \tilde{\mathcal{A}}^{(l)}$ for $j = 2$, $\forall \, t \in \mathcal{T}$, respectively.
\end{lemma}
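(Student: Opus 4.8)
The plan is to prove the claim by a direct computation from the explicit form of $\psi_t$ in \eqref{equ:psi}, collapsing each case to the hypothesized monotonicity of the increment $v_{t+1}([k-\mu_j]^+,l)-v_{t+1}([k-\mu_1]^+,l)$ in $k$. Since $\tilde{\mathcal{A}}^{(l)}=\{j,1\}$ has only two elements, checking (super/sub)additivity on $\mathcal{K}\times\tilde{\mathcal{A}}^{(l)}$ reduces to a single inequality between the two actions $j$ and $1$. The crucial structural observation is that the action ordering flips between the two cases: $j<1$ when $j=0$, whereas $j>1$ when $j=2$. This is exactly why the \emph{same} hypothesis forces subadditivity for $j=0$ and superadditivity for $j=2$. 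The first step is therefore to record, for arbitrary $\hat{k}\ge\check{k}$ in $\mathcal{K}$ and for any $k$, the two relevant values $\psi_t(k,l,1)=q+\sum_{l'\in\mathcal{L}}p(l'\mid l)\,v_{t+1}([k-\mu_1]^+,l')$ and $\psi_t(k,l,j)=\sum_{l'\in\mathcal{L}}p(l'\mid l)\,v_{t+1}([k-\mu_j]^+,l')$, using the convention $\mu_0=0$ in the case $j=0$.

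Next I would form the cross-difference in Definition \ref{def:additive}. For $j=0$, subadditivity on $\mathcal{K}\times\{0,1\}$ is the inequality $\psi_t(\hat{k},l,1)+\psi_t(\check{k},l,0)\le\psi_t(\hat{k},l,0)+\psi_t(\check{k},l,1)$; substituting the two expressions above, the toll $q$ (which multiplies $I(a=1)$, hence appears exactly once on each side) cancels, and what remains is $\sum_{l'}p(l'\mid l)\bigl[v_{t+1}(\hat{k},l')-v_{t+1}([\hat{k}-\mu_1]^+,l')\bigr]\ge\sum_{l'}p(l'\mid l)\bigl[v_{t+1}(\check{k},l')-v_{t+1}([\check{k}-\mu_1]^+,l')\bigr]$. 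This is precisely the hypothesis (with $\mu_0=0$) invoked at each location $l'\in\mathcal{L}$ and then combined with the nonnegative weights $p(l'\mid l)$. The case $j=2$ is the mirror image: superadditivity on $\mathcal{K}\times\{1,2\}$, $\psi_t(\hat{k},l,2)+\psi_t(\check{k},l,1)\ge\psi_t(\hat{k},l,1)+\psi_t(\check{k},l,2)$, rearranges—again after cancelling $q$—to $\sum_{l'}p(l'\mid l)\bigl[v_{t+1}([\hat{k}-\mu_2]^+,l')-v_{t+1}([\hat{k}-\mu_1]^+,l')\bigr]\ge\sum_{l'}p(l'\mid l)\bigl[v_{t+1}([\check{k}-\mu_2]^+,l')-v_{t+1}([\check{k}-\mu_1]^+,l')\bigr]$, which is the hypothesis with $\mu_j=\mu_2$, summed against $p(l'\mid l)\ge0$. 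Assembling these two computations completes the proof.

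There is no real analytic difficulty here: $\psi_t$ depends on the action only through which of $\mu_1$ or $\mu_j$ is subtracted from $k$, plus an action-toggled constant $q$ that is independent of $k$, so the cross-difference collapses to a weighted average of the hypothesized increment inequality. The only thing requiring care is the bookkeeping—keeping straight that the ordered action pair is $(0,1)$ in one case and $(1,2)$ in the other, so that one hypothesis yields opposite additivity conclusions; verifying the $q$-cancellation; and noting explicitly that the hypothesis, being assumed for all locations, legitimately applies at the generic $l'$ over which we sum. (This hypothesis is the multi-step analogue of Lemma \ref{lem:v_diff}, obtained by iteration; in the overall argument for Theorem \ref{thm:threshold} it is established inductively alongside, but here it is taken as given.)
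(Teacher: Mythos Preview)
Your proof is correct and follows essentially the same route as the paper: both arguments substitute the explicit form \eqref{equ:psi} of $\psi_t$, observe that the $q$-term cancels in the cross-difference, and reduce the (sub/super)additivity inequality to the hypothesized monotonicity of $v_{t+1}([k-\mu_j]^+,l')-v_{t+1}([k-\mu_1]^+,l')$ in $k$, summed against the nonnegative weights $p(l'\mid l)$. The only cosmetic difference is that the paper carries the indicator $I(\check{a}=1)-I(\hat{a}=1)$ through a single unified computation and then reads off its sign in each case, whereas you split into the two cases $j=0$ and $j=2$ from the outset; the content is identical.
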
   

\begin{proof}
  Let $\hat{k}, \check{k} \in \mathcal{K}$, $\hat{a}, \check{a} \in \tilde{\mathcal{A}}^{(l)}$, $l \in \mathcal{L}$, and $t \in \mathcal{T}$ be given, where $\hat{k} \geq \check{k}$ and $\hat{a} \geq \check{a}$. Then
\opt{opt1}{
\begin{align} 
   & \psi_t( \hat{k} ,l, \hat{a} )  + \psi_t( \check{k} ,l, \check{a} ) - \psi_t( \hat{k} ,l, \check{a} ) - \psi_t( \check{k} ,l, \hat{a} ) \nonumber \\
=  & \sum_{l' \in \mathcal{L}} p(l' \,|\, l) \Bigl( I(\check{a} = 1) - I(\hat{a} = 1) \Bigr) \Bigl[ v_{t+1}( [\hat{k} - \mu_j]^+ , l) - v_{t+1}\bigl([ \hat{k} - \mu_1 ]^+,l\bigr) \nonumber \\
   & \quad\quad\quad\quad\quad\quad\quad\quad\quad\quad\quad\quad\quad - v_{t+1}( [\check{k} - \mu_j]^+ , l) + v_{t+1}\bigl([ \check{k} - \mu_1 ]^+,l\bigr) \Bigr],
\end{align} 
} 
\opt{opt2}{
\begin{align} 
   & \hspace{-0cm} \! \psi_t( \hat{k} ,l, \hat{a} )  + \psi_t( \check{k} ,l, \check{a} ) - \psi_t( \hat{k} ,l, \check{a} ) - \psi_t( \check{k} ,l, \hat{a} ) \nonumber \\
=  & \hspace{-0cm} \sum_{l' \in \mathcal{L}} p(l' \,|\, l) \Bigl( I(\check{a} = 1) - I(\hat{a} = 1) \Bigr) \Bigl[ v_{t+1}( [\hat{k} - \mu_j]^+ , l) - \nonumber \\
   & \hspace{-0.3cm} v_{t+1}\bigl([ \hat{k} \! - \! \mu_1 ]^+\!,l\bigr) \! - v_{t+1}( [\check{k} \! - \! \mu_j]^+\!, l) + v_{t+1}\bigl([ \check{k} \! - \! \mu_1 ]^+\!,l\bigr) \Bigr], \nonumber \\
\end{align}
}
where the equality is derived using (\ref{equ:psi}). Notice that $p(l' \,|\, l) \geq 0, \, \forall \, l, l' \in \mathcal{L}$.
  First, for $j = 0$, we have $\hat{a}, \check{a} \in \{0,1\}$, so $I(\check{a} = 1) \leq I(\hat{a} = 1)$. From the given condition in Lemma \ref{lem:subadditive} and Definition \ref{def:additive}, we conclude that $\psi_t(k,l,a)$ is subadditive on $\mathcal{K} \times \tilde{\mathcal{A}}^{(l)}$.  
  On the other hand, for $j = 2$, we have $\hat{a}, \check{a} \in \{1,2\}$, so $I(\check{a} = 1) \geq I(\hat{a} = 1)$. We can then conclude that $\psi_t(k,l,a)$ is superadditive on $\mathcal{K} \times \tilde{\mathcal{A}}^{(l)}$. 
\end{proof}


\subsection{Proof of Threshold Policy in Dimension $k$ in Theorem \ref{thm:threshold}} \label{app:threshold}

  
	We consider the case $0 \leq \mu_j \leq \mu_1$. Let $\hat{k}, \check{k} \in \mathcal{K}$, $l \in \mathcal{L}$, and $t \in \mathcal{T}$ be given. Let $\check{k} = [\hat{k} - z \sigma]^+$, where $z > 0$. If the condition of Theorem \ref{thm:threshold} is satisfied, by iteratively applying Lemma \ref{lem:v_diff}, we have
\opt{opt1}{
\begin{equation}
\begin{split} 
  v_{t}( [ \hat{k} - \mu_j ]^+ , l) - v_{t}\bigl( [ \hat{k} - \mu_1 ]^+, l\bigr) 
\geq v_{t}\bigl( [\hat{k} - \sigma - \mu_j]^+ , l \bigr) - v_{t}\bigl([\hat{k} - \sigma - \mu_1]^+,l\bigr) \geq \cdots  \\
  \geq v_{t}\bigl( [\hat{k} - z \sigma - \mu_j]^+ , l \bigr) - v_{t}\bigl([\hat{k} - z \sigma - \mu_1]^+,l\bigr) =  v_{t}\bigl( [ \check{k} - \mu_j ]^+ , l \bigr) - v_{t}\bigl([\check{k} - \mu_1]^+,l\bigr).
\end{split}
\end{equation}
}
\opt{opt2}{
\begin{align} 
  	 & v_{t}( [ \hat{k} - \mu_j ]^+ , l) - v_{t}\bigl( [ \hat{k} - \mu_1 ]^+, l\bigr)  \nonumber \\
\geq & \; v_{t}\bigl( [\hat{k} - \sigma - \mu_j]^+ , l \bigr) - v_{t}\bigl([\hat{k} - \sigma - \mu_1]^+,l\bigr)
\geq \cdots \nonumber \\
\geq & \; v_{t}\bigl( [\hat{k} - z \sigma - \mu_j]^+ , l \bigr) - v_{t}\bigl([\hat{k} - z \sigma - \mu_1]^+,l\bigr) \nonumber \\
= & \; v_{t}\bigl( [ \check{k} - \mu_j ]^+ , l \bigr) - v_{t}\bigl([\check{k} - \mu_1]^+,l\bigr).
\end{align}
}

  For $l \in \mathcal{L}^{(0)}$, we consider $j = 0$ (see Appendix \ref{app:additive}). Since $0 = \mu_0 < \mu_1$, $\psi_t(k,l,a)$ is subadditive on $\mathcal{K} \times \tilde{\mathcal{A}}^{(l)}$ from Lemma \ref{lem:subadditive}. From \cite[pp.\,104,\,115]{puterman_md05}, $\delta_t^{*}(k,l)$ is a monotone non-decreasing function in $k$. From \eqref{equ:setal_2} and \eqref{equ:deltat}, since $\delta_t^{*}(k,l) \in \tilde{\mathcal{A}}^{(l)} = \{0,1\}$, $\delta_t^{*}(k,l)$ is in the form of (\ref{equ:threshold0}). 
  
  Then, we consider $l \in \mathcal{L}^{(1)}$ for $\mu_2 \leq \mu_1$. 
  Since $j = 2$ (see Appendix \ref{app:additive}), $\psi_t(k,l,a)$ is superadditive on $\mathcal{K} \times \tilde{\mathcal{A}}^{(l)}$ from Lemma \ref{lem:subadditive}. From \cite[pp.\,104,\,115]{puterman_md05}, $\delta_t^{*}(k,l)$ is a monotone non-increasing function in $k$. From \eqref{equ:setal_2} and \eqref{equ:deltat}, as $\delta_t^{*}(k,l) \in \tilde{\mathcal{A}}^{(l)} = \{1,2\}$, $\delta_t^{*}(k,l)$ is in the form of (\ref{equ:threshold2}).  \hfill \QEDclosed

\subsection{Incremental Changes of $v_t(k,l)$} \label{app:vtkl_increment}

  The proof of the threshold structure in dimension $t$ in Theorem \ref{thm:threshold} is based on the results in Lemmas \ref{lem:v_diff_boundary} and \ref{lem:v_diff_time}, which establish that the incremental changes of $v_t(k,l)$ with respect to $k$ is non-decreasing in time $t$.
  
\begin{lemma} \label{lem:v_diff_boundary}
  If $h(k)$ is a convex and non-decreasing function in $k$, then we have 
\opt{opt1}{
\begin{equation}
	v_{T+1}([k-\mu_j]^+ , l) - v_{T+1}\bigl( [ k - \mu_1 ]^+, l\bigr)   
  \geq  v_{T}([k-\mu_j]^+ , l) - v_{T}\bigl( [ k - \mu_1 ]^+, l\bigr), \, \forall \, k \in \mathcal{K}, l \in \mathcal{L}.
\end{equation}
}
\opt{opt2}{
\begin{equation}
\begin{split}
	v_{T+1}([k-\mu_j]^+\!\!, l) - v_{T+1}\bigl( [ k - \mu_1 ]^+\!\!, l\bigr) \geq  v_{T}([k-\mu_j]^+\!\!, l) \\
	- v_{T}\bigl( [ k - \mu_1 ]^+, l\bigr), \, \forall \, k \in \mathcal{K}, l \in \mathcal{L}. \quad\quad\quad
\end{split}  
\end{equation}
}   
\end{lemma}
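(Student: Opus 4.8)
The plan is to exploit the boundary condition \eqref{equ:boundary}, which gives $v_{T+1}(k,l)=h(k)$ for every $l$, to obtain a closed form for $v_T$. Substituting \eqref{equ:boundary} into \eqref{equ:opteq2_2} and using $\sum_{l'\in\mathcal L}p(l'\,|\,l)=1$ yields $\psi_T(k,l,a)=I(a=1)q+h\bigl([k-\mu(l,a)]^+\bigr)$, so by \eqref{equ:opteq_2} we have $v_T(k,l)=\min_{a\in\tilde{\mathcal A}^{(l)}}\{I(a=1)q+h([k-\mu(l,a)]^+)\}$, where $\mu(l,a)\in\{\mu_j,\mu_1\}$ with $\mu_0=0$ and $\mu_j\le\mu_1$ as in Appendix \ref{app:threshold}. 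Since $v_{T+1}(\cdot,l)=h$, the claimed inequality is equivalent to the statement that $g(k):=h(k)-v_T(k,l)$ is non-decreasing in $k$ on $\mathcal K$: evaluating $g$ at $x=[k-\mu_j]^+$ and $y=[k-\mu_1]^+$ (with $x\ge y$ because $\mu_j\le\mu_1$) gives exactly $h(x)-h(y)\ge v_T(x,l)-v_T(y,l)$, which is the asserted inequality.

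So the core step is to prove that $g$ is non-decreasing. Writing $g(k)=\max_{a\in\tilde{\mathcal A}^{(l)}}\{\phi_{\mu(l,a)}(k)-I(a=1)q\}$ with $\phi_m(k):=h(k)-h([k-m]^+)$, it suffices to show that each $\phi_m$ is non-decreasing, since a pointwise maximum of non-decreasing functions is non-decreasing and the $-I(a=1)q$ terms are constants. For $k\le m$ we have $\phi_m(k)=h(k)-h(0)=h(k)$, which is non-decreasing; for $k\ge m$ we have $\phi_m(k)=h(k)-h(k-m)$, whose monotonicity is precisely the statement that the increment of $h$ over a window of fixed width $m$ does not decrease as the window slides to the right — the defining property of a convex $h$. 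At the kink $k=m$ the two pieces agree, and for $k$ slightly below $m$ we have $\phi_m(k)=h(k)\le h(m)=\phi_m(m)$ because $h$ is non-decreasing and $h(0)=0$, so there is no downward jump. Hence $\phi_m$, and therefore $g$, is non-decreasing on $\mathcal K$, which completes the argument.

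I expect the only slightly delicate point to be the behavior of $\phi_m$ across the truncation kink at $k=m$: there one must invoke both $h(0)=0$ and the \emph{monotonicity} of $h$ (not merely its convexity) to rule out a decrease, whereas away from the kink the argument is pure convexity. Everything else is routine: the reduction to monotonicity of $g$, and the observation that $g$ is a maximum of shifted copies of the $\phi_m$'s. It is worth noting that this lemma is just the base case (comparing $t=T+1$ with $t=T$) of the induction that will be carried out in Lemma \ref{lem:v_diff_time}, so it is enough to handle this single step cleanly here.
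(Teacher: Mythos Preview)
Your proof is correct and takes a genuinely different route from the paper's.

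The paper proceeds by direct case analysis: it writes out $v_T([k-\mu_j]^+,l)$ and $v_T([k-\mu_1]^+,l)$ explicitly as minima over the two actions in $\tilde{\mathcal A}^{(l)}$, then splits into Case I ($q+h([k-\mu_j-\mu_1]^+)>h([k-2\mu_j]^+)$) and Case II (the reverse), with Case II further divided into two subcases according to which term achieves the second minimum. In each of the resulting three branches the desired inequality is reduced to a bare comparison of increments of $h$, handled by convexity and monotonicity.

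Your argument instead packages the whole thing into the single observation that $g(k)=h(k)-v_T(k,l)$ is non-decreasing, which you obtain by writing $g$ as a pointwise maximum of the functions $\phi_m(k)-\text{const}$ and showing each $\phi_m(k)=h(k)-h([k-m]^+)$ is non-decreasing. This is more conceptual and avoids the case split entirely; it also makes transparent why the lemma would continue to hold with more than two actions in $\tilde{\mathcal A}^{(l)}$, since a maximum of arbitrarily many non-decreasing functions is still non-decreasing. The paper's approach, by contrast, is completely elementary but tied to the two-action structure. One small wording point: ``equivalent'' is slightly too strong, since you only need (and only use) the implication that $g$ non-decreasing $\Rightarrow$ the stated inequality; the converse is not claimed and not needed.
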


\begin{proof}
  First, by \eqref{equ:boundary}, we have
\opt{opt1}{
\begin{equation}
	\text{LHS} = v_{T+1}([k-\mu_j]^+ , l) - v_{T+1}\bigl( [ k - \mu_1 ]^+, l\bigr)  = h([k-\mu_j]^+) - h([ k - \mu_1 ]^+).
\end{equation}
}
\opt{opt2}{
\begin{equation}
\begin{split}
	\text{LHS} = v_{T+1}([k-\mu_j]^+ , l) - v_{T+1}\bigl( [ k - \mu_1 ]^+, l\bigr) \\
	 = h([k-\mu_j]^+) - h([ k - \mu_1 ]^+). \quad\quad\quad\quad\;\,
\end{split}
\end{equation}
}

  Next, we obtain
\opt{opt1}{
\begin{align} 
\text{RHS} = & \, v_{T}([k - \mu_j]^+ , l) - v_{T}\bigl( [ k - \mu_1 ]^+, l\bigr) \nonumber \\ 
= & \min\{\psi_T([k - \mu_j]^+,l,j), \psi_T([k-\mu_j]^+,l,1)\} - \min\{\psi_T([ k - \mu_1 ]^+,l,j), \psi_T([ k - \mu_1 ]^+,l,1)\} \nonumber \\				
= & \min\Bigl\{\sum_{l' \in \mathcal{L}} p(l' \,|\, l) \, v_{T+1} \bigl( [k - 2\mu_j]^+, l' \bigr), 
		q + \sum_{l' \in \mathcal{L}} p(l' \,|\, l) \, v_{T+1} \bigl( [k - \mu_j - \mu_1]^+, l' \bigr) \Bigr\}  \nonumber \\
	& - \min\Bigl\{\sum_{l' \in \mathcal{L}} p(l' \,|\, l) \, v_{T+1} \bigl( [k - \mu_j - \mu_1]^+, l' \bigr), 
		q + \sum_{l' \in \mathcal{L}} p(l' \,|\, l) \, v_{T+1} \bigl( [k - 2 \mu_1]^+, l' \bigr) \Bigr\}	\nonumber \\
= & \min\Bigl\{\!h([k - 2\mu_j]^+), q+h([k - \mu_j - \mu_1]^+) \!\Bigr\} - \min\Bigl\{\!h([k - \mu_j - \mu_1]^+), q+h([k - 2\mu_1]^+)\!\Bigr\}. 
\end{align}
}
\opt{opt2}{
\begin{align} 
 	& \text{RHS} \nonumber \\ 
= & v_{T}([k - \mu_j]^+ , l) - v_{T}\bigl( [ k - \mu_1 ]^+, l\bigr) \nonumber \\ 
= & \min\{\psi_T([k - \mu_j]^+,l,j), \psi_T([k-\mu_j]^+,l,1)\}  \nonumber \\
  & - \min\{\psi_T([ k - \mu_1 ]^+,l,j), \psi_T([ k - \mu_1 ]^+,l,1)\} \nonumber \\
= & \min\Bigl\{\sum_{l' \in \mathcal{L}} p(l' \,|\, l) \, v_{T+1} \bigl( [k - 2\mu_j]^+, l' \bigr), \nonumber \\
	&	q + \sum_{l' \in \mathcal{L}} p(l' \,|\, l) \, v_{T+1} \bigl( [k - \mu_j - \mu_1]^+, l' \bigr) \Bigr\}  \nonumber \\
	& - \min\Bigl\{\sum_{l' \in \mathcal{L}} p(l' \,|\, l) \, v_{T+1} \bigl( [k - \mu_j - \mu_1]^+, l' \bigr), \nonumber \\
	&	q + \sum_{l' \in \mathcal{L}} p(l' \,|\, l) \, v_{T+1} \bigl( [k - 2 \mu_1]^+, l' \bigr) \Bigr\}	\nonumber \\
= & \min\Bigl\{h([k - 2\mu_j]^+), q+h([k - \mu_j - \mu_1]^+) \Bigr\}  \nonumber \\
  & - \min\Bigl\{h([k - \mu_j - \mu_1]^+), q+h([k - 2\mu_1]^+)\Bigr\}.
\end{align}
}
%
  The second, third, and fourth equalities are due to \eqref{equ:opteq}, \eqref{equ:opteq2_2}, and \eqref{equ:boundary}, respectively.
   We consider the following two cases:

Case I: $q+h([k - \mu_j - \mu_1]^+) > h([k - 2\mu_j]^+)$. In this case, we have 
\opt{opt1}{
\begin{equation}
	q > h([k - 2\mu_j]^+) - h([k - \mu_j - \mu_1]^+) \geq h([k - \mu_j - \mu_1]^+) - h([k - 2\mu_1]^+),
\end{equation}
}
\opt{opt2}{
\begin{equation}
\begin{split}
	q > h([k - 2\mu_j]^+) - h([k - \mu_j - \mu_1]^+) \\
	\geq h([k - \mu_j - \mu_1]^+) - h([k - 2\mu_1]^+), \!
\end{split}	
\end{equation}
}
where the second inequality is due to the fact that $h(k)$ is a convex and non-decreasing function in $k$, and $\mu_j \leq \mu_1$. Thus, we obtain $q + h([k - 2\mu_1]^+) \geq h([k - \mu_j - \mu_1]^+)$.
  As a result, we have
\opt{opt1}{
\begin{equation} \label{equ:lrsrhs1}
	\text{RHS} = h([k - 2\mu_j]^+) - h([k - \mu_j - \mu_1]^+) 
	\leq h([k - \mu_j]^+) - h([k - \mu_1]^+) 
	= \text{LHS},
\end{equation}
}
\opt{opt2}{
\begin{equation} \label{equ:lrsrhs1}
\begin{split}
	\text{RHS} = h([k - 2\mu_j]^+) - h([k - \mu_j - \mu_1]^+) \\
	\leq h([k - \mu_j]^+) - h([k - \mu_1]^+) 
	= \text{LHS}, \hspace{-0.3cm}
\end{split}	
\end{equation}
}
where the inequality is established for convex and non-decreasing $h(k)$ and $\mu_j \leq \mu_1$.

Case II: $q+h([k - \mu_j - \mu_1]^+) \leq h([k - 2\mu_j]^+)$
  In this case, we have
\opt{opt1}{
\begin{equation}
	\text{RHS} = q+h([k - \mu_j - \mu_1]^+) - \min\Bigl\{h([k - \mu_j - \mu_1]^+), q+h([k - 2\mu_1]^+)\Bigr\}.
\end{equation}
}
\opt{opt2}{
\begin{equation}
\begin{split}
	\text{RHS} = q+h([k - \mu_j - \mu_1]^+) \quad\quad\quad\quad\quad\quad\quad\quad\;\\
	- \min\Bigl\{h([k - \mu_j - \mu_1]^+), q+h([k - 2\mu_1]^+)\Bigr\}.
\end{split}	
\end{equation}
}
  We consider the following two subcases in Case II:
(a) First, if $h([k - \mu_j - \mu_1]^+) \leq q+h([k - 2\mu_1]^+)$, then we have
\opt{opt1}{
\begin{equation} \label{equ:lrsrhs2}
\begin{split}
	\text{RHS} = q+h([k - \mu_j - \mu_1]^+) - h([k - \mu_j - \mu_1]^+) 
	\leq h([k - 2\mu_j]^+) - h([k - \mu_j - \mu_1]^+) \\
	\leq h([k - \mu_j]^+) - h([k - \mu_1]^+) 
	= \text{LHS},
\end{split}	
\end{equation}
}
\opt{opt2}{
\begin{equation} \label{equ:lrsrhs2}
\begin{split}
	\text{RHS} = q+h([k - \mu_j - \mu_1]^+) - h([k - \mu_j - \mu_1]^+) \\
	\leq h([k - 2\mu_j]^+) - h([k - \mu_j - \mu_1]^+) \hspace{1.2cm} \\
	\leq h([k - \mu_j]^+) - h([k - \mu_1]^+) 
	= \text{LHS}, \hspace{0.8cm}
\end{split}	
\end{equation}
}
where the first inequality is due to the given condition in Case II, and the second inequality is due to the convex and non-decreasing $h(k)$.

(b) Second, if $h([k - \mu_j - \mu_1]^+) > q+h([k - 2\mu_1]^+)$, then we have
\opt{opt1}{
\begin{equation} \label{equ:lrsrhs3}
	\text{RHS} = h([k - \mu_j - \mu_1]^+) - h([k - 2\mu_1]^+) \leq h([k - \mu_j]^+) - h([k - \mu_1]^+) = \text{LHS}, 
\end{equation}
}
\opt{opt2}{
\begin{equation} \label{equ:lrsrhs3}
\begin{split}
	\text{RHS} = h([k - \mu_j - \mu_1]^+) - h([k - 2\mu_1]^+) \\
	\leq h([k - \mu_j]^+) - h([k - \mu_1]^+) = \text{LHS}, \hspace{-0.4cm}
\end{split}	
\end{equation}
}
where the inequality is due to the fact that $h(k)$ is a convex and non-decreasing function in $k$.
  Combining all the cases in \eqref{equ:lrsrhs1}, \eqref{equ:lrsrhs2}, and \eqref{equ:lrsrhs3}, we have $\text{LHS} \geq \text{RHS}$.
\end{proof}

\begin{lemma} \label{lem:v_diff_time}
  If $h(k)$ is a convex and non-decreasing function in $k$, then we have 
\opt{opt1}{
\begin{equation}
	v_{t+1}([k - \mu_j]^+ , l) - v_{t+1}\bigl( [ k - \mu_1 ]^+, l\bigr)   
  \geq  v_{t}([k - \mu_j]^+ , l) - v_{t}\bigl( [ k - \mu_1 ]^+, l\bigr), \, \forall \, k \in \mathcal{K}, l \in \mathcal{L}, t \in \mathcal{T}.
\end{equation}
}
\opt{opt2}{
\begin{equation}
\begin{split}
	v_{t+1}([k - \mu_j]^+ , l) - v_{t+1}\bigl( [ k - \mu_1 ]^+, l\bigr)   
  \geq  v_{t}([k - \mu_j]^+ , l) \\
  - v_{t}\bigl( [ k - \mu_1 ]^+, l\bigr), \, \forall \, k \in \mathcal{K}, l \in \mathcal{L}, t \in \mathcal{T}. \quad\quad\quad\quad
\end{split}  
\end{equation}
}
\end{lemma}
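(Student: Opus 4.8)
The plan is to prove the inequality by backward induction on $t$, with Lemma~\ref{lem:v_diff_boundary} serving as the base case $t=T$. Throughout we stay in the regime treated in this subsection, $0\le\mu_j\le\mu_1$, and we use the reduced action set $\tilde{\mathcal A}^{(l)}=\{j,1\}$ of \eqref{equ:setal_2} together with the compact form \eqref{equ:psi} of $\psi_t$. Writing $\phi_\tau(k,l):=v_\tau([k-\mu_j]^+,l)-v_\tau([k-\mu_1]^+,l)$, the claim is $\phi_{t+1}(k,l)\ge\phi_t(k,l)$ for all $k\in\mathcal K$, $l\in\mathcal L$, $t\in\mathcal T$. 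The base case $\phi_{T+1}\ge\phi_T$ is precisely Lemma~\ref{lem:v_diff_boundary}; for the inductive step I would assume the statement one step later, $\phi_{t+2}(k,l)\ge\phi_{t+1}(k,l)$ for all $k,l$, and deduce $\phi_{t+1}\ge\phi_t$. Note the (easy-to-slip-on) index bookkeeping here: because $v_t$ is built from $v_{t+1}$ through \eqref{equ:opteq_2}, the statement at $t$ must be fed by the statement at $t+1$, so the induction runs downward from $T$.

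For the inductive step I would first unfold both $\phi_{t+1}$ and $\phi_t$ with the optimality equation. Setting $\bar v_\tau(m,l):=\sum_{l'\in\mathcal L}p(l'\,|\,l)\,v_\tau(m,l')$, equation \eqref{equ:opteq_2} with $\tilde{\mathcal A}^{(l)}=\{j,1\}$ reads $v_\tau(m,l)=\min\{\bar v_{\tau+1}([m-\mu_j]^+,l),\,q+\bar v_{\tau+1}([m-\mu_1]^+,l)\}$. Substituting $m=[k-\mu_j]^+$ and $m=[k-\mu_1]^+$, and using that $[[k-\mu_j]^+-\mu_j]^+=[k-2\mu_j]^+$, $[[k-\mu_j]^+-\mu_1]^+=[k-\mu_j-\mu_1]^+=[[k-\mu_1]^+-\mu_j]^+$, and $[[k-\mu_1]^+-\mu_1]^+=[k-2\mu_1]^+$, one obtains $\phi_{t+1}(k,l)=\min\{X_1,q+X_2\}-\min\{X_2,q+X_3\}$ with $X_1=\bar v_{t+2}([k-2\mu_j]^+,l)$, $X_2=\bar v_{t+2}([k-\mu_j-\mu_1]^+,l)$, $X_3=\bar v_{t+2}([k-2\mu_1]^+,l)$, and identically $\phi_t(k,l)=\min\{Y_1,q+Y_2\}-\min\{Y_2,q+Y_3\}$ with $Y_i$ the same three expressions but using $\bar v_{t+1}$ in place of $\bar v_{t+2}$. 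Since $\mu_j\le\mu_1$ forces $[k-2\mu_j]^+\ge[k-\mu_j-\mu_1]^+\ge[k-2\mu_1]^+$, Lemma~\ref{lem:vnondeck}(a) (averaged against $p(\cdot\,|\,l)\ge0$) gives $X_1\ge X_2\ge X_3$ and $Y_1\ge Y_2\ge Y_3$.

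The final step is a short algebraic collapse that I expect to be the cleanest way past what would otherwise become a multi-case argument like the one in the proof of Lemma~\ref{lem:v_diff_boundary}. Writing $\xi_1:=X_1-X_2\ge0$, $\xi_2:=X_2-X_3\ge0$ and $\eta_1:=Y_1-Y_2\ge0$, $\eta_2:=Y_2-Y_3\ge0$, a direct computation yields $\phi_{t+1}(k,l)=\min\{\xi_1,q\}+[\xi_2-q]^+$ and $\phi_t(k,l)=\min\{\eta_1,q\}+[\eta_2-q]^+$, each of which is non-decreasing in both gap variables. The induction hypothesis $\phi_{t+2}\ge\phi_{t+1}$, applied at file sizes $[k-\mu_j]^+$ and $[k-\mu_1]^+$ and then averaged against $p(\cdot\,|\,l)\ge0$, is exactly the pair of inequalities $\xi_1\ge\eta_1$ and $\xi_2\ge\eta_2$; combined with the monotonicity of $\min\{\cdot,q\}$ and $[\cdot-q]^+$ this gives $\phi_{t+1}(k,l)\ge\phi_t(k,l)$ and closes the induction. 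The main obstacle is organizing the two nested minima on each side so that this collapse is visible, plus the minor care that $2\mu_j$, $\mu_j+\mu_1$, and $2\mu_1$ land on the $\sigma$-grid of state values (exactly as already tacitly assumed in the proof of Lemma~\ref{lem:v_diff_boundary}); everything else is routine, and in particular Lemma~\ref{lem:v_diff} is not needed for this argument.
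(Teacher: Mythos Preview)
Your proof is correct and shares the paper's backward-induction skeleton: base case $t=T$ via Lemma~\ref{lem:v_diff_boundary}, inductive step by unfolding the optimality equation and invoking the hypothesis at $t+1$. The difference is purely in how the inductive step is organized. The paper introduces the four minimizing actions $a_5,\dots,a_8$ and telescopes the quantity of interest into $E+F+G-H$, then argues $F,G\ge 0$ by optimality and $E\ge H$ from the hypothesis (applied inside the $I(a_5=1)$ and $I(a_8=1)$ branches). You instead write both sides as $\min\{X_1,q+X_2\}-\min\{X_2,q+X_3\}$ and collapse this to the closed form $\min\{\xi_1,q\}+[\xi_2-q]^+$, whose monotonicity in the gap variables $\xi_1,\xi_2$ delivers the conclusion once the hypothesis (averaged against $p(\cdot\,|\,l)$ at file sizes $[k-\mu_j]^+$ and $[k-\mu_1]^+$) gives $\xi_i\ge\eta_i$. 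Your route avoids the action-labeling and the implicit case split on $I(a_5{=}1),I(a_8{=}1)$, at the cost of one algebraic identity; the paper's route stays closer to the template already used for Lemma~\ref{lem:v_diff}. Both rest on the same two ingredients, $\mu_j\le\mu_1$ (for $X_1\ge X_2\ge X_3$) and the hypothesis at the shifted file sizes, and both tacitly treat $\mu_j$ as a fixed constant when applying the hypothesis at locations $l'\ne l$.
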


\begin{proof}
	We prove it by induction. First, from Lemma \ref{lem:v_diff_boundary}, we have established the result for $t = T$.
  Assume that for a given $t \in \mathcal{T}$, we have
\opt{opt1}{
\begin{equation} \label{equ:v_diff_t+1_time}
\begin{split}
  v_{t+2}([k - \mu_j]^+ , l) - v_{t+2}\bigl( [ k - \mu_1 ]^+, l\bigr) 
  \geq v_{t+1}\bigl([k - \mu_j]^+ , l \bigr) - v_{t+1}\bigl([k - \mu_1]^+,l\bigr), \forall \, k \in \mathcal{K}, l \in \mathcal{L}.
\end{split}
\end{equation}
}
\opt{opt2}{
\begin{equation} \label{equ:v_diff_t+1_time}
\begin{split}
  v_{t+2}([k - \mu_j]^+ , l) - v_{t+2}\bigl( [ k - \mu_1 ]^+, l\bigr) \hspace{2.7cm} \\
  \geq v_{t+1}\bigl([k - \mu_j]^+\!\!, l \bigr) - v_{t+1}\bigl([k - \mu_1]^+\!\!,l\bigr), \forall \, k \in \mathcal{K}, l \in \mathcal{L}.
\end{split}
\end{equation}
}
%
   Let actions $a_5$, $a_6$, $a_7$, $a_8 \in \tilde{\mathcal{A}}^{(l)}$ be defined such that
\opt{opt1}{   
\begin{eqnarray}
  v_{t+1}([k - \mu_j]^+, l) \!\!\! & = \!\!\! & \min_{a \in \tilde{\mathcal{A}}^{(l)}}\{ \psi_{t+1}([k - \mu_j]^+, l, a) \} = \psi_{t+1}([k - \mu_j]^+, l, a_5),\\ 			
  v_{t+1}\bigl( [ k - \mu_1 ]^+, l\bigr) \!\!\! & = \!\!\! & \min_{a \in \tilde{\mathcal{A}}^{(l)}}\{ \psi_{t+1}\bigl( [k - \mu_1]^+, l, a \bigr) \} = \psi_{t+1}\bigl( [k - \mu_1]^+, l, a_6 \bigr), \label{equ:v_spt2_time} \\
  v_{t}\bigl([k - \mu_j]^+, l\bigr) \!\!\! & = \!\!\! & \min_{a \in \tilde{\mathcal{A}}^{(l)}}\{ \psi_t\bigl([k - \mu_j]^+, l, a \bigr) \} = \psi_t\bigl([k - \mu_j]^+, l, a_7 \bigr), \text{ and} \label{equ:v_spt3_time} \\ 
  v_{t}\bigl( [ k - \mu_1]^+, l\bigr) \!\!\! & = \!\!\! & \min_{a \in \tilde{\mathcal{A}}^{(l)}}\{ \psi_t\bigl( [k - \mu_1]^+, l, a \bigr) \} = \psi_t\bigl( [k - \mu_1]^+, l, a_8 \bigr). 
\end{eqnarray}
}
\opt{opt2}{
\begin{align}
	v_{t+1}([k - \mu_j]^+, l) & = \min_{a \in \tilde{\mathcal{A}}^{(l)}}\{ \psi_{t+1}([k - \mu_j]^+, l, a) \} \nonumber \\
	& = \psi_{t+1}([k - \mu_j]^+, l, a_5),
\end{align}
\begin{align} \label{equ:v_spt2_time}
	v_{t+1}\bigl( [ k - \mu_1 ]^+, l\bigr) & =  \min_{a \in \tilde{\mathcal{A}}^{(l)}}\{ \psi_{t+1}\bigl( [k - \mu_1]^+, l, a \bigr) \} \nonumber \\
	& = \psi_{t+1}\bigl( [k - \mu_1]^+, l, a_6 \bigr), 
\end{align}
\begin{align} \label{equ:v_spt3_time}
	v_{t}\bigl([k - \mu_j]^+, l\bigr) & = \min_{a \in \tilde{\mathcal{A}}^{(l)}}\{ \psi_t\bigl([k - \mu_j]^+, l, a \bigr) \} \nonumber \\
	& = \psi_t\bigl([k - \mu_j]^+, l, a_7 \bigr), \text{   and}  
\end{align}
\begin{align}
	v_{t}\bigl( [ k - \mu_1]^+, l\bigr) & = \min_{a \in \tilde{\mathcal{A}}^{(l)}}\{ \psi_t\bigl( [k - \mu_1]^+, l, a \bigr) \} \nonumber \\
	& = \psi_t\bigl( [k - \mu_1]^+, l, a_8 \bigr).		
\end{align}
}

  We thus have  
\opt{opt1}{
\begin{align} \label{equ:efgh}
	   & v_{t+1}([k - \mu_j]^+ , l) - v_{t+1}\bigl( [ k - \mu_1 ]^+, l\bigr) - v_{t}([k - \mu_j]^+ , l) + v_{t}\bigl( [ k - \mu_1 ]^+, l\bigr) \nonumber \\
= 	 & \; \psi_{t+1}([k - \mu_j]^+, l, a_5) - \psi_{t+1}\bigl( [k - \mu_1]^+, l, a_6 \bigr)  
  - \psi_t\bigl([k - \mu_j]^+, l, a_7 \bigr) + \psi_t\bigl( [k - \mu_1]^+, l, a_8 \bigr) \nonumber \\
= 	 & \; \underbrace{\psi_{t+1}([k - \mu_j]^+, l, a_5) - \psi_{t}([k - \mu_j]^+, l, a_5)}_{E}    
		  +  \underbrace{\psi_{t}([k - \mu_j]^+, l, a_5) - \psi_t\bigl([k - \mu_j]^+, l, a_7 \bigr)}_{F} \nonumber \\
		 & \underbrace{\Bigl( - \psi_{t+1}\bigl( [k - \mu_1]^+, l, a_6 \bigr) + \psi_{t+1}\bigl( [k - \mu_1]^+, l, a_8\bigr) \Bigr)}_{G} \nonumber \\
		 & - \Bigl( \underbrace{\psi_{t+1}\bigl( [k - \mu_1]^+, l, a_8\bigr) - \psi_t\bigl( [k - \mu_1]^+, l, a_8 \bigr)}_{H} \Bigr) =  E + F + G - H. 
\end{align}
}
\opt{opt2}{
\begin{align} \label{equ:efgh}
	   & v_{t+1}([k - \mu_j]^+ , l) - v_{t+1}\bigl( [ k - \mu_1 ]^+, l\bigr)   \nonumber \\
  	 & - v_{t}([k - \mu_j]^+ , l) + v_{t}\bigl( [ k - \mu_1 ]^+, l\bigr) \nonumber \\
= 	 & \; \psi_{t+1}([k - \mu_j]^+, l, a_5) - \psi_{t+1}\bigl( [k - \mu_1]^+, l, a_6 \bigr) \nonumber \\ 
  	 & - \psi_t\bigl([k - \mu_j]^+, l, a_7 \bigr) + \psi_t\bigl( [k - \mu_1]^+, l, a_8 \bigr) \nonumber \\
= 	 & \; \underbrace{\psi_{t+1}([k - \mu_j]^+, l, a_5) - \psi_{t}([k - \mu_j]^+, l, a_5)}_{E} \nonumber \\
		 & +  \underbrace{\psi_{t}([k - \mu_j]^+, l, a_5) - \psi_t\bigl([k - \mu_j]^+, l, a_7 \bigr)}_{F} \nonumber \\  
		 & \underbrace{\Bigl( - \psi_{t+1}\bigl( [k - \mu_1]^+, l, a_6 \bigr) + \psi_{t+1}\bigl( [k - \mu_1]^+, l, a_8\bigr) \Bigr)}_{G} \nonumber \\
		 & - \Bigl( \underbrace{\psi_{t+1}\bigl( [k - \mu_1]^+, l, a_8\bigr) - \psi_t\bigl( [k - \mu_1]^+, l, a_8 \bigr)}_{H} \Bigr) \nonumber \\
=    & \; E + F + G - H.
\end{align}
}
  We have
\opt{opt1}{
\begin{align} 
E =  & \sum_{l' \in \mathcal{L}} p(l' \,|\, l) \Bigl[ I(a_5 = 1) \bigl[ v_{t+2}([k - \mu_j - \mu_1]^+,l') - v_{t+1}([k - \mu_j - \mu_1]^+,l') \bigr]  \nonumber \\
     & \quad\quad\quad\quad\quad + \bigl(1 - I(a_5 = 1)\bigr) \bigl[ v_{t+2}([k - 2\mu_j]^+,l') - v_{t+1}([k - 2\mu_j]^+,l') \bigr] \Bigr] \nonumber \\ 
\geq &  \sum_{l' \in \mathcal{L}} p(l' \,|\, l) \bigl[ v_{t+2}([k - \mu_j - \mu_1]^+,l') - v_{t+1}([k - \mu_j - \mu_1]^+,l') \bigr] \nonumber \\				
\geq & \sum_{l' \in \mathcal{L}} p(l' \,|\, l) \Bigl[ I(a_8 = 1) \bigl[ v_{t+2}([k - 2\mu_1]^+,l') - v_{t+1}([k - 2\mu_1]^+,l') \bigr] \nonumber \\
     & \quad\quad\quad\quad\quad + \bigl(1 - I(a_8 = 1)\bigr) \bigl[ v_{t+2}([k - \mu_j - \mu_1]^+,l') - v_{t+1}([k - \mu_j - \mu_1]^+,l') \bigr] \Bigr] = H,   
\end{align}
}
\opt{opt2}{
\begin{align} 
E =  & \sum_{l' \in \mathcal{L}} p(l' \,|\, l) \Bigl[ I(a_5 = 1) \bigl[ v_{t+2}([k - \mu_j - \mu_1]^+,l')  \nonumber \\
     & - v_{t+1}([k - \mu_j - \mu_1]^+,l') \bigr] + \bigl(1 - I(a_5 = 1)\bigr) \nonumber \\
     &  \times \bigl[ v_{t+2}([k - 2\mu_j]^+,l') - v_{t+1}([k - 2\mu_j]^+,l') \bigr] \Bigr] \nonumber \\
\geq & \sum_{l' \in \mathcal{L}} p(l' \,|\, l) \bigl[ v_{t+2}([k - \mu_j - \mu_1]^+,l')  \nonumber \\
		 & \hspace{2cm} - v_{t+1}([k - \mu_j - \mu_1]^+,l') \bigr]  \nonumber \\		
\geq & \sum_{l' \in \mathcal{L}} p(l' \,|\, l) \Bigl[ I(a_8 = 1) \bigl[ v_{t+2}([k - 2\mu_1]^+,l')  \nonumber \\
		 &  - v_{t+1}([k - 2\mu_1]^+,l') \bigr] + \bigl(1 - I(a_8 = 1)\bigr) \nonumber \\
		 & \hspace{-0.7cm} \times \!\! \bigl[ v_{t+2}([k - \mu_j - \mu_1]^+,l') - v_{t+1}([k - \mu_j - \mu_1]^+,l') \bigr] \Bigr] \nonumber \\
	=  & \; H,   
\end{align}
}
where the two equalities are obtained by using \eqref{equ:psi} and the two inequalities are due to the induction hypothesis in (\ref{equ:v_diff_t+1_time}).
  From (\ref{equ:v_spt3_time}) and (\ref{equ:v_spt2_time}), we have $F \geq 0$ and $G \geq 0$, respectively. Overall, from (\ref{equ:efgh}), we obtain
\opt{opt1}{
\begin{equation}
v_{t+1}([k - \mu_j]^+ , l) - v_{t+1}\bigl( [ k - \mu_1 ]^+, l\bigr)   
  \geq  v_{t}([k - \mu_j]^+ , l) - v_{t}\bigl( [ k - \mu_1 ]^+, l\bigr),
\end{equation}
}
\opt{opt2}{
\begin{equation}
\begin{split}
v_{t+1}([k - \mu_j]^+ , l) - v_{t+1}\bigl( [ k - \mu_1 ]^+, l\bigr)   \\
  \geq  v_{t}([k - \mu_j]^+ , l) - v_{t}\bigl( [ k - \mu_1 ]^+, l\bigr), \hspace{-0.5cm}
\end{split}
\end{equation}
}
which completes the proof.
\end{proof}

\subsection{Proof of Threshold Policy in Dimension $t$ in Theorem \ref{thm:threshold}} \label{app:threshold_time}

 Assume that there exists $t \in \mathcal{T}$ such that $\psi_t(k,l,1) < \psi_t(k,l,j)$. In this way, we have $\delta_t^{*}(k,l) = 1$ from \eqref{equ:deltat} and 
\opt{opt1}{
\begin{equation}
\begin{split}
	q < \sum_{l' \in \mathcal{L}} p(l' \,|\, l) \, \Bigl[ v_{t+1} \bigl( [k - \mu_j]^+, l' \bigr) - v_{t+1} \bigl( [k - \mu_1]^+, l' \bigr) \Bigr] \\
    \leq \sum_{l' \in \mathcal{L}} p(l' \,|\, l) \, \Bigl[ v_{t+2} \bigl( [k - \mu_j]^+, l' \bigr) - v_{t+2} \bigl( [k - \mu_1]^+, l' \bigr) \Bigr],
\end{split}
\end{equation}
}
\opt{opt2}{
\begin{equation}
\begin{split}
	q < \!\! \sum_{l' \in \mathcal{L}} p(l' \,|\, l) \Bigl[ v_{t+1} \bigl( [k - \mu_j]^+, l' \bigr) - v_{t+1} \bigl( [k - \mu_1]^+, l' \bigr) \Bigr] \\
    \leq \!\! \sum_{l' \in \mathcal{L}} p(l' \,|\, l) \Bigl[ v_{t+2} \bigl( [k - \mu_j]^+, l' \bigr) - v_{t+2} \bigl( [k - \mu_1]^+, l' \bigr) \Bigr],
\end{split}
\end{equation}
}
where the first inequality is by the definition in \eqref{equ:opteq2_2}, and the second inequality is from Lemma \ref{lem:v_diff_time}. It implies that $\psi_{t+1}(k,l,1) < \psi_{t+1}(k,l,j)$, so $\delta_{t+1}^{*}(k,l) = 1$ from \eqref{equ:deltat}. 
  Overall, we show that if there exists $t \in \mathcal{T}$ such that $\delta_t^{*}(k,l) = 1$, then $\delta_{t+1}^{*}(k,l) = 1$, which establishes the threshold structure of the optimal policy in the time dimension.
\hfill \QEDclosed

\subsection{Proof of Theorem \ref{thm:threshold_t}} \label{app:threshold_t}

  Let $j = 0$ for $l \in \mathcal{L}^{(0)}$ and $j = 2$ for $l \in \mathcal{L}^{(1)}$ as mentioned in Appendix \ref{app:additive}.
  
  (a) Let $l \in \mathcal{L}$ and $t \in \mathcal{T}$ be given.
  By the definition of threshold $k^*(l,t)$ in \eqref{equ:threshold0} and \eqref{equ:threshold2}, we have $\delta_t^*(k,l) = j$ for $0 \leq k < k^*(l,t)$.
  From the threshold structure in time in \eqref{equ:threshold_time0} and \eqref{equ:threshold_time2}, it implies that $\delta_{t-1}^*(k,l) = j$ for $0 \leq k < k^*(l,t)$.
  By the definition of threshold $k^*(l,t-1)$ at time $t-1$, we can conclude that $k^*(l,t-1) \geq k^*(l,t)$. 
  
  (b) Let $l \in \mathcal{L}$ and $k \in \mathcal{K}$ be given.
  By the definition of threshold $t^*(k,l)$ in \eqref{equ:threshold_time0} and \eqref{equ:threshold_time2}, we have $\delta_t^*(k,l) = 1$ for $t \geq t^*(k,l)$.
  It implies that $\delta_{t}^*(k + \sigma,l) = 1$ for $t \geq t^*(k,l)$ due to the threshold structure in file size in \eqref{equ:threshold0} and \eqref{equ:threshold2}.
  By the definition of threshold $t^*(k + \sigma,l)$ for file size $k + \sigma$, we can conclude that $t^*(k, l) \geq t^*(k + \sigma, l)$. \hfill \QEDclosed

\end{document}